\newtheorem{theorem}{Theorem}
\newtheorem{proposition}[theorem]{Proposition}
\newtheorem{remark}{Remark}
\newcommand{\units}[1]{\;[{\tt #1}]}
\newenvironment{proof}{\textbf{Proof:}}{\hfill$\blacksquare$}
\title{\LARGE \bf
	Less Conservative Robust Reference Governors and Their Applications
 \footnote{This work was supported by Air Force Office of Scientific Research, grant number FA9550-20-1-0385; and National Science Foundation, grant number CMMI-1904394.}
}
\author{Miguel Castroviejo-Fernandez$\;^{a,\dagger}$ \and Huayi Li$\;^a$\and\; Andrés Cotorruelo$\;^b$\and Emanuele Garone$\;^b$\and Ilya Kolmanovsky$\;^a$
\thanks{{\tt\small mcastrov@umich.edu},$\;^a$University of Michigan, Ann Arbor, MI 48109 USA. $\!^b$ Université libre de Bruxelles, Brussels, 1050, Belgium }
}
\begin{document}

\maketitle
\thispagestyle{empty}
\pagestyle{empty}

\begin{abstract}   
The applications of reference governors to systems 
with unmeasured set-bounded disturbances can lead to conservative solutions.  This conservatism can be reduced by estimating the disturbance from output measurements and canceling it in the nominal control law.  In this paper, a reference governor based on such an approach is considered and  time-varying, disturbance and state estimation errors bounding sets are derived.  Consequently, the traditional implementation of a reference governor, which exploits a constraint admissible positively-invariant set of constant commands and initial states, is replaced by one which utilizes a time-dependent sequence of similar sets (which are not necessary nested). Examples are reported which include two applications to longitudinal control of aircraft that illustrate handling of elevator uncertainty and wing icing.
\end{abstract}


\section{Introduction}

Reference Governors (RGs) \cite{garone2017reference}  are supervisory schemes added to pre-stabilized or stable systems in order to ensure constraint satisfaction by altering reference commands.  The basic approach is to choose, at each time instant, a reference command such that the associated constant-reference trajectory does not lead to constraint violation. At each time instant, the reference command is chosen as close as possible to the desired reference command.

Reference governor schemes can be constructed for systems with set-bounded unmeasured disturbance inputs \cite{garone2017reference}.  However, the practical use of such robust reference governor schemes may be impeded by their potential conservatism  if the disturbance set is large and made to cover extreme values of the disturbance observed in experiments.  This conservatism typically manifests itself in a small set of commands that can be safely followed, and even lack of feasible solutions in more extreme cases.

Additionally, in traditional reference and command governor formulations, full state measurement is assumed. In practice, however, this is rarely the case and only partial noisy state measurements might be available. In \cite{ANGELI2001151}, the authors consider the feasibility of command governors (CG) for discrete time linear systems subject to bounded disturbances and measurement noise. A set-membership estimator is defined and the CG is such that constrains are respected for all elements of the set of estimated sets. The estimator set is computed online at each time step. This is computationally expensive, and can quickly become prohibitive for high dimensional systems. In this case, simpler supersets can be used but with an associated, substantial increase in conservatism. Similarly, the authors of \cite{MAYNE20061217} propose a Model Predicitve Control (MPC) formulation based on a reconstructed state vector from a partial noisy state measurement with the observer dynamics assumed to be at steady state. The disturbance and noise induced error are then combined to form a single, time invariant, disturbance set. From there, the problem is handled similarly to the Robust MPC formulation in \cite{mayne2005robust}. This idea is extended to the case where the observer dynamics have  yet to reach steady state in \cite{MAYNE20092082}, where a time varying disturbance set is obtained offline and used in the Robust MPC formulation. It is worth noting that, in practice, the time-varying set might be complex to compute unless the dynamics, disturbance set and noise set are fairly simple.

In the setting of partial state measurement and bounded unmeasured disturbances, less conservative RG designs can be obtained by incorporating more information about the disturbance. In particular, the disturbance can be estimated online with a disturbance input observer (see e.g., the survey article \cite{chen2015disturbance}) based on system output measurements, while the nominal control law can account for the disturbance estimate in generating the control signal. 
In this setting, and assuming the control constraints are not too tight, the equivalent disturbance that needs to be accommodated in the
enforcement of the constraints is the state and disturbance estimation errors.  Incorporating additional information about the disturbance, e.g., bounds on the rate of change of the disturbance,  provides set bounds 
on the estimation error and can help reduce conservatism in the reference governor operation.

In this paper, we highlight the above considerations for systems represented by discrete-time linear models with disturbance inputs.  In the traditional implementation of the reference governor, a positively-invariant set of constraint admissible constant commands and initial states, $\tilde{O}_\infty$, is used; this set shrinks and eventually becomes empty as the disturbance set grows \cite{kolmanovsky1998theory}.  In the setting of state and disturbance estimation, the set bounding the disturbance and state
estimation errors is {\em time-dependent,} as, e.g., over time these errors become smaller. To account for this time dependence, we replace $\tilde{O}_\infty$ with a time-dependent sequence of sets $\tilde{O}_{\infty,n}$, where $n$ designates the discrete-time instant, and we use the latter set sequence to inform the reference selection by the reference governor.  The use of a time-dependent sequence of sets satisfying periodic invariance properties has been considered in \cite{ossareh2020reference} for the reference governor design in periodic systems; in our case, time-dependent sets satisfying suitably defined invariance and safety properties are used to construct a reference governor. Three examples are used to illustrate the application of the above less conservative approach to reference governor design.

In the next section we introduce a class of systems with matched disturbance and control inputs. For it, we then derive the prediction model based on the observer that will be used in the reference governor design. In the following two sections, we discuss the construction of ellipsoidal and polyhedral observer error bounding sets and define a time-dependent sequence of constraint admissible systems that we use for implementation of the reference governor.
After presenting theoretical results, we  extend the class of systems considered to include systems in which disturbances and control inputs are not matched. The last two sections report simulation results for the examples and concluding remarks.

\textbf{Notations:}
$\mathbb R$ denotes the set of real numbers, $\mathbb Z$ the set of integers, $\mathbb Z_{\geq 0}$ and $\mathbb Z_{>0}$ are the set of positive and strictly positive integers, respectively. $I_n$ denotes the $n\times n$ identity matrix, $0_{n\times m}$ denotes a $n$ by $m$ zero matrix.
$\mathcal{B}^n = \left\{x\in \mathbb{R}^n:\; ||x|| \leq 1\right\}$, where $|| \cdot||$ denotes the  $2$-norm.
Given $a\in\mathbb R^n,\;b\in\mathbb R^m$, $(a,b) = [a^\top,b^\top]^\top$. Given two sets $A, B\subseteq \mathbb R^n$ we denote their Pontryagin difference as $A\ominus B = \{z\in\mathbb R^n:\;z+v\in A,\;\forall v\in B\}$, and their Minkowski sum as $A\oplus B = \{a+b: a\in A,\; b\in B\}$. ${\tt int } A$ denotes the interior of $A$.  Given a matrix $P\in\mathbb R^{n\times m}$, $P\succ 0$denotes that $P$ is positive definite.

\section{System with a Disturbance Input and a Nominal Controller}
Consider the following class of linear discrete-time systems
 \begin{equation}\label{equ:1}
 \begin{array}{c}
x_{n+1} = Ax_n + B(u_n+w_n), \\
y_n = C x_n,
\end{array}
\end{equation}
where $n\in\mathbb Z_{\geq 0}$ is the current time instant, $x_n\in\mathbb R^{n_x}$ is the state vector, $u_n\in\mathbb R^{n_u}$ is the control vector, $w_n\in\mathbb R^{n_w}$ is the disturbance vector, and
$y_n\in\mathbb R^{n_y}$ is the measured output vector. Note that the disturbance and control inputs are matched. 
State and control constraints are imposed as
\begin{equation}\label{equ:2}
x_n \in X, \quad u_n \in U,\;\forall n\in\mathbb Z_{\geq0},
\end{equation}
where we assume that 
$X$ and $U$ are compact and convex sets 
satisfying $0 \in {\tt int} X$, $0 \in {\tt int} U$.
The disturbance $w_n$ is assumed to be set-bounded, \begin{equation}
    w_n \in W,
    \label{eq:WConstraint}
\end{equation}
where $W$ is a known compact and convex set with $0\in{\tt int} W$. Furthermore, we let $ w_{n+1}=w_n + \tilde{w}_n, \quad \tilde{w}_n=(w_{n+1}-w_n)$
and assume that
\begin{equation}\label{equ:3}
\tilde{w}_n \in \tilde{W},
\end{equation}
where $\tilde{w}_n$ is the change in the disturbance and $\tilde{W}$ is a known compact set with $0\in{\tt int}\tilde W$. 

The situation we are interested in is when the set $\tilde{W}$ is small, representing a slowly-varying disturbance input, $w_n$, even though the set $W$ could be large.  In such a case, applying the reference governor under the assumption $w_n \in W$ and not exploiting rate-bounds on $w_n$ can lead to conservative results.  

As the disturbance is slowly-varying and matched to the control input, we estimate it with an observer and cancel it through the control input. The control law takes the form,
\begin{equation} \label{eq:input_MatchedDist}
u_n=K \hat{x}_n+\Gamma v_n - \hat{w}_n,
\end{equation}
where the gain $K$ is stabilizing, i.e., $(A+BK)$ is a Schur matrix, 
$\Gamma$ is chosen to ensure that the steady-state, $x_v$, for a given constant $v$ matches the desired value in the absence of any disturbance, 
$\hat{x}_n$ is the state estimate and $\hat{w}_n$ is the estimated value of the disturbance. We define the output estimate, $\hat y_n$, and an extended state vector and its estimate as $x^{\tt ext}_n = (x_n,\;w_n)$ and $\hat x^{\tt ext}_n = (\hat x_n,\;\hat w_n)$ respectively.

\section{Observer and Prediction Model for Reference Governor Design}
\label{sec:matchedDist}

The estimate $\hat{x}^{\tt ext}_n$ is generated by a Luenberger observer,

\begin{subequations}
\label{eq:input_MatchedDist.1}
\begin{align}
&\hat x^{\tt ext}_{n+1} =A_{\tt ext} \hat x^{\tt ext}_n +B_{\tt ext} u_n + L (y_n-\hat{y}_n), \\
&\hat{y}_n = C_{\tt ext}\hat x^{\tt ext}_n,\quad  C_{\tt ext} = \begin{bmatrix}    C & 0_{n_y\times n_w}\end{bmatrix}, \\
 &A_{\tt ext} =  \begin{bmatrix}A &  B\\0_{n_w\times n_x}&I_{n_w}\end{bmatrix},\; B_{\tt ext} = \begin{bmatrix}B\\0_{n_w\times n_u}\end{bmatrix},\;L = \begin{bmatrix}
     L_x\\L_w
 \end{bmatrix}.\nonumber
\end{align}\end{subequations}
Let 
\begin{equation*}\label{eq:input_MatchedDist.5}
e_{x,n}=x_n-\hat{x}_n,\; e_{w,n}=w_n -\hat{w}_n, \; e_n = x^{\tt ext}_n - \hat x^{\tt ext}_n.
\end{equation*}
Then,
\begin{equation}\label{eq:input_MatchedDist.75}
e_{n+1}=(A_{\tt ext} - L C_{\tt ext} )e_n+\left[\begin{array}{c} 0_{n_x\times n_w} \\ I_{n_w} \end{array} \right] \tilde{w}_n=\bar{A} e_n + \bar{B} \tilde{w}_n.
\end{equation}
We assume that the observer error dynamics are stable and the matrix $\bar{A}$ is Schur.

The state and control constraints in (\ref{equ:2}) can be recast as
\begin{align*}
&\hat{x}_n + e_{x,n} \in X,\\
\text{and }&u_n = K \hat{x}_n +\Gamma v_n - (w_n-e_{w,n}) \in U,\\
\text{or, }& u_n = K \hat{x}_n + \Gamma v_n + e_{w,n} \in U \ominus (-W). 
 \end{align*}
Using (\ref{eq:input_MatchedDist}) in (\ref{eq:input_MatchedDist.1}),
we define the prediction model with constraints,
\begin{subequations}
\label{eq:predModFin}
\begin{align}\label{equ:6}
&\hat{x}_{n+1} = A_{\tt cl} \hat{x}_n + B_{\tt cl}  v_n + B_w e_n,\\
&y_{{\tt cl},n} = C_{\tt cl} \hat{x}_n 
+ D_{\tt cl }v_n
+ D_w e_n \in Y_{\tt cl},\label{equ:6.1}
\end{align}
\end{subequations} 
where $A_{\tt cl}=A+BK$ is the closed-loop system matrix,
$B_{\tt cl}=B \Gamma$ is the closed-loop input matrix, 
\begin{align*}
&B_w= L_x C_{\tt ext},\;C_{\tt cl} =
\left[\begin{array}{c} I_{n_x \times n_x} \\ K \end{array} \right],~ D_{\tt cl}=\left[\begin{array}{c}  0_{n_x \times n_v} \\ \Gamma \end{array} \right],\\
&D_w= I_{n_x+n_w},~ Y_{\tt cl} = X \times (U\ominus (-W)).
\end{align*}
\begin{remark}
In certain applications, input constraints may be imposed as $u_n+w_n \in U$ in order to avoid hitting nonlinear saturation even with actuator uncertainty. In this case, $Y_{\tt cl} = X\times U$ in (\ref{equ:6.1}).
\end{remark}

\section{Observer Error Bounding}

In this section we consider two approaches to constructing a set sequence, $\{\Omega_k\}_{k=0}^\infty$, bounding the observer error.

\subsection{Ellipsoidal bounding}
Ellipsoidal bounding techniques of reachable sets are well-developed, see e.g., \cite{SHINGIN2004389}.  As we are interested in computing these bounds systematically for different $n$, we consider this ellipsoidal bounding more explicitly.

Consider the observer error dynamics given by (\ref{eq:input_MatchedDist.75}) with the bound (\ref{equ:3}).
Let $P \succ 0$ be the positive definite solution to the discrete-time Lyapunov equation,
$$\bar{A}^\top P \bar{A} - P = -I_{n_x }.$$
Such a solution exists since $\bar{A}$ is a Schur matrix. 
Furthermore, all eigenvalues of the matrix $P$ are greater than or equal to $1$ \footnote{Indeed, if $\lambda>0$ is an eigenvalue of $P$ and $v \neq 0$ is the corresponding eigenvector (both real as $P$ is symmetric),
$v^\top\bar{A}^\top P \bar{A} v - v^\top P v = -v^\top I_{n_x \times n_x}v$ implying $(\lambda - 1)\|v\|^2 = v^\top\bar{A}^\top P \bar{A} v \geq 0$ and $\lambda \geq 1$.}. Let
\begin{align*}
    &V_k= e_k^\top P e_k,
    \end{align*}
    and define 
\begin{align*}
&\mu = 1-(\frac{c-1}{c}) \frac{1}{\lambda_{\tt max}(P)}, \;c\in\mathbb R \text{ s.t. } c >1,\\
&\rho \in\mathbb R \text{ s.t. } \rho \geq \max_{\tilde{w}_k \in \tilde{W}}
\{c \|\bar{A}^\top P \bar{B}  \tilde{w}_k\|^2+
\tilde{w}_k^\top \bar{B}^\top P \bar{B} \tilde{w}_k\}.
\end{align*}
It can be shown 
that $V_{k+1}\leq \mu V_k+\rho$, for all $k\in\mathbb Z_{\geq 0} $.
Hence,
\begin{equation*}
	V_k \leq \mu^k V_0+\sum^{k-1}_{i=0}\mu^i\rho = \mu^k V_0 + \frac{\mu^k -1}{\mu-1} \rho.
	\end{equation*}
Note that $c>1$, and $\lambda_{max}(P) \geq 1$ ensure that  $0<\mu<1$.

Suppose that the initial observer error can be a priori set-bounded by a specified compact set $E_0$, i.e. $e_0\in E_0$.  Let
 \begin{align}&V_{0,{\tt max}}=\max_{e_0 \in E_0} e_0^\top P e_0, \text{ and}\nonumber\\
&\xi_k=\max \left\{\nu_k,~\frac{\rho}{1-\mu} \right\}, \nu_k=\mu^k V_{0,{\tt max}} + \frac{1-\mu^k}{1-\mu} \rho. \label{eq:GammaDef}\end{align}
Given that $V_k\leq \xi_k$ for all $ k\in\mathbb Z_{\geq 0}$, 
\begin{equation}\label{equ:12}
e_k \in \Omega_k\triangleq\{e\in\mathbb R^{n_x+n_w}:~e^\top P e \leq \xi_k\},\forall  k\in\mathbb Z_{\geq 0}.
\end{equation}
Furthermore, $\{\xi_k\}_{k=0}^\infty$ is a non-increasing sequence as the sequence $\{\nu_k\}_{k=0}^\infty$ converges to $\frac{\rho}{1-\mu}$ and is monotonically increasing if $\nu_0= V_{0,{\tt max}}\leq\frac{\rho}{1-\mu}$ and monotonically decreasing if $\nu_0\geq\frac{\rho}{1-\mu}$. As a result, the sequence $\{\Omega_k\}_{k=0}^\infty$ is nested, i.e. $\Omega_{k+1} \subseteq \Omega_k$ for all $k\in\mathbb Z_{\geq 0}$ and 
$$ \lim_{k \to \infty} \Omega_k = \Omega_\infty=\{e:~e^\top P e \leq
\frac{\rho}{1-\mu} \},$$
in the Hausdorff norm sense.

\subsection{Polyhedral  bounding}
The ellipsoidal error bounding sequence presented above has attractive properties but it can be a conservative estimate. To mitigate this issue, we propose an other error bounding sequence.

Suppose that the sets $E_0$ (such that $e_0 \in E_0)$ and $\tilde{W}$ are symmetric sets about the origin. 
We set
\begin{equation*}
\Omega_0=E_0. 
\end{equation*}

To estimate $\Omega_k$ for $k \geq 1$, we pick a set of normalized 
directions, $L_j \in \mathbb{R}^{n_x+n_w}$, $j=1,\cdots,n_L,$ and note that
\begin{equation*}
L_j^\top e_k = L_j^\top \bar{A}^ke_0+ L_j^\top\sum_{i=0}^{k-1-i}\bar{A}^{k-1}\bar{B}\tilde{w}_i.
\end{equation*}
Define
\begin{align*}
&w_{e,{\tt max},k,j} = w_{e,{\tt max},k-1,j}+ \max_{\tilde{w} \in  \tilde{W}} L_j^\top \bar{A}^{k-1} \bar{B}
\tilde{w},\\
&y_{e,{\tt max},k,j} = \max_{e_0 \in  E_0} L_j^\top \bar{A}^k e_0+w_{e,{\tt max},k,j}, 
\end{align*}
where $w_{e,{\tt max},0,j}=0$ for all $j=1,\cdots,n_L$.
Then,
\begin{equation}
\label{eq:OmegaPoly} 
\Omega_k=\{e:~|L_j^\top e |\leq y_{e,{\tt max},k,j},~ j=1,\cdots,n_L  \}.
\end{equation}

The set $\Omega_k$ provides an outer bound of the set of error vectors, $e_k$, that are reachable at the time instant $k$ by the disturbance sequences $\tilde{w}_0,\cdots,\tilde{w}_{k-1}$ from the initial error $e_0 \in E_0$.  
However, for such $\Omega_k$ the property $\Omega_{k+1} \subseteq \Omega_k$ is not guaranteed without additional assumptions.

Note that a vertex representation of 
(\ref{eq:OmegaPoly}) can be useful in computing the Pontryagin difference \cite{kolmanovsky1998theory}. While vertices can be computed from the representation (\ref{eq:OmegaPoly}), an alternative approach is as follows.
If the vertices of a base polyhedron, 
$\{e:~|L_j^\top e| \leq l_j \}$ are
known, then defining $\Omega_k=\{e:~|L_j^\top e| \leq l_j \max_j \{ \frac{y_{e,{\tt max},k,j}}{l_j} \} \}$
leads to a polyhedral overbounding set with known vertices that are scaled vertices of the base polyhedron.
\section{Safe Sets}
Given an error bounding sequence of sets $\{\Omega_i\}_{i=0}^\infty$, the sets of reachable outputs and states of system (\ref{eq:predModFin}) at time instant $n+k\in\mathbb Z_{\geq0}$ starting from the origin at time instant $n\in\mathbb Z_{\geq0}$: $\hat x_{n} = 0$ and considering inputs $e_i \in \Omega_i$, for all $i=n,\cdots,n+k$, are respectively given by 
\begin{align*}
&\mathcal E^y(n,k) =  D_w \Omega_{n+k} \oplus C_{\tt cl}\mathcal E^x(n,k),\\
&\mathcal E^x(n,0)=\{0\},\;\mathcal E^x(n,k) = \bigoplus_{i=0}^{k-1} A_{\tt cl}^{k-1-i} B_w \Omega_{n+i},~\forall k>0.
\end{align*}

Additionally, for each $ k \in\mathbb Z_{\geq 0}$, we define the linear operators
\begin{align*}
   & H^y_k = D_{\tt cl} +C_{\tt cl}H^x_k,\;H^x_k = (I-A_{\tt cl})^{-1} (I-A_{\tt cl}^k) B_{\tt cl},\\
   &\text{as well as }
   H^y_\infty = D_{\tt cl}+C_{\tt cl} H^x_\infty,\;H^x_\infty = (I-A_{\tt cl})^{-1}B_{\tt cl},
\end{align*}
 and the set of safe constant commands and states at the time instant $n$ that result in trajectories satisfying the constraints for all future time instants:
\begin{align} \label{equ:equ20}
&O_{\infty,n} = \bigg\{\begin{bmatrix} v \\ \hat{x}
 \end{bmatrix}:H^y_k v +C_{\tt cl} A_{\tt cl}^k \hat{x}
 \in Y_{{\tt cl},n,k},\forall k \in \mathbb N\bigg\}, 
\end{align}
where $Y_{{\tt cl},n,0}=Y_{\tt cl} \ominus D_w \Omega_n$ and for $k\geq 1$,
\begin{align}
Y_{{\tt cl},n,k}&=Y_{\tt cl} \ominus D_w \Omega_{n+k} \ominus_{i = 0}^{k-1} C_{\tt cl} A_{\tt cl}^{k-1-i} B_w \Omega_{n+i}\nonumber\\
            &= Y_{\tt cl} \ominus \mathcal E^y(n,k), \label{equ:ineqj2}
\end{align}
where the second line follows from the identity $S_1\ominus(S_2\oplus S_3) = (S_1\ominus S_2)\ominus S_3 =(S_1\ominus S_3)\ominus S_2 $ \cite[Theorem 2.1]{kolmanovsky1998theory} that holds for any three sets $S_1,S_2,S_3$.
The following properties are derived from these definitions by straightforward algebraic manipulations for any $k \geq 1,~n\geq 0$:
\begin{subequations}
    \label{equ:ivk3522}
    \begin{align}
        \label{equ:ivk3522-1}
    &H^y_{k-1}+C_{\tt cl}A_{\tt cl}^{k-1}B_{\tt cl}=H_{k}^y,\\
\label{equ:ivk3522-2}
&    Y_{{\tt cl},n,k} \oplus C_{\tt cl} A_{\tt cl}^{k-1} B_w \Omega_n \subseteq Y_{{\tt cl},n+1,k-1},
 \end{align}
\end{subequations}
where to derive (\ref{equ:ivk3522-2}) we use the property $(S_1\ominus S_2) \oplus S_2 \subseteq S_1$ \cite[Theorem 2.1 ]{kolmanovsky1998theory} which holds for 
any two sets $S_1$ and $S_2$.  
Note that $O_{\infty,n}$ is formed by an infinite number of inequalities.  It often turns out that the inequalities corresponding to larger $k$ are redundant and can be eliminated, leading to an exact representation of $O_{\infty,n}$ by a finite number of inequalities \cite{gilbert1991linear}.  However, this does not always happen, in general.  Hence a finitely-determined subset of $O_{\infty,n}$ is defined by slightly tightening the range of commands.  

Specifically, suppose we are able to construct
\footnote{If $\Omega_{n+k} \subseteq \bar{\Omega}$, for all $k \geq 0$, then this construction involves outer bounding the minimum invariant set of 
$\hat{x}_{k+1}=A_{cl} \hat{x}_k+B_w e_k,~ 
e_k \in \bar{\Omega}$, which can be done with methods in \cite{ong2006minimal,rakovic2005invariant}. If $\bar{F}$ is this outer bound, then
$\tilde{Y}_n=Y_{cl} \ominus D_w \bar{\Omega} \ominus C \bar{F} \ominus \epsilon_n 
\mathcal{B}^{n_x}$.
}
nonempty sets $\tilde{Y}_n$, $n \geq 0$, such that
\begin{equation}\label{equ:Yn}
\tilde{Y}_n \subseteq \tilde{Y}_{n+1},
\end{equation}
and, for some $\epsilon_n>0$, 
\begin{equation*}
\tilde{Y}_n \oplus \epsilon_n \mathcal B^{n_y} \subseteq Y_{{\tt cl},n,k},\quad \text{for all } k\in\mathbb Z_{\geq 0}.
\end{equation*}

We have  the following result \cite{kolmanovsky1995maximal}: 
\begin{proposition}
     Suppose $(C_{\tt cl}, A_{\tt cl})$ is observable, $A_{\tt cl}$ is Schur, 
     and $Y_{\tt cl}$ is compact. Then there exists $k^*(n) \in \mathbb{Z}_{\geq 0}$ such that
\begin{align} \label{equ:prop11}
&\tilde{O}_{\infty,n} =  \bigg\{( v, \hat{x}):~
    H^y_\infty v \in \tilde{Y}_{n}, \nonumber\\
    & 	H^y_k v + C_{\tt cl} A_{\tt cl}^k \hat{x}
	\in Y_{{\tt cl},n,k},\;\forall k \geq 0 
	 \bigg\} \\
  &=   \bigg\{( v, \hat{x}):~
    H^y_\infty v \in \tilde{Y}_{n}, \nonumber\\
    & 	H^y_k v + C_{\tt cl} A_{\tt cl}^k \hat{x}
	\in Y_{{\tt cl},n,k},\;k = 0,1,\cdots,k^*(n) 
	 \bigg\}, \nonumber
\end{align}
i.e., $\tilde{O}_{\infty,n}$ is a finitely-determined subset of ${O}_{\infty,n}$.
\end{proposition}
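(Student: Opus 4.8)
The plan is to prove the two inclusions separately. The inclusion of the infinitely-constrained set into the finitely-constrained one is immediate, since the latter is defined by a subset of the same constraints. The substance of the proof is the reverse inclusion: I must exhibit a finite index $k^*(n)$ beyond which every constraint $H^y_k v + C_{\tt cl} A_{\tt cl}^k \hat{x} \in Y_{{\tt cl},n,k}$ is automatically implied by the steady-state constraint $H^y_\infty v \in \tilde{Y}_n$ together with the finitely many constraints $k = 0,\dots,k^*(n)$. The key preparatory step is a change of variables that separates steady-state from transient behavior. Setting $z = \hat{x} - H^x_\infty v$ and using the telescoping identity (\ref{equ:ivk3522-1}) together with the fact that $A_{\tt cl}^k$ commutes with $(I - A_{\tt cl})^{-1}$, one verifies
\begin{equation*}
H^y_k v + C_{\tt cl} A_{\tt cl}^k \hat{x} = H^y_\infty v + C_{\tt cl} A_{\tt cl}^k z, \qquad \forall k \geq 0.
\end{equation*}
Thus the $k$-th constraint reads $H^y_\infty v + C_{\tt cl} A_{\tt cl}^k z \in Y_{{\tt cl},n,k}$, with the transient carried entirely by the geometrically decaying term $C_{\tt cl} A_{\tt cl}^k z$.

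Next I would establish a uniform bound on $z$ over the finitely-constrained set, which is where observability enters. Fix any candidate index $k^*(n) \geq n_x - 1$. For a feasible $(v,\hat{x})$ and each $k = 0,\dots,n_x-1$, the identity above gives $C_{\tt cl} A_{\tt cl}^k z = \big(H^y_\infty v + C_{\tt cl} A_{\tt cl}^k z\big) - H^y_\infty v \in Y_{{\tt cl},n,k} \oplus (-\tilde{Y}_n)$, a set contained in the compact set $Y_{\tt cl} \oplus (-Y_{\tt cl})$ independently of $v$ and $k$, since $Y_{{\tt cl},n,k} \subseteq Y_{\tt cl}$ and $\tilde{Y}_n \subseteq Y_{\tt cl}$. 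Stacking these for $k = 0,\dots,n_x-1$, the map $z \mapsto (C_{\tt cl} A_{\tt cl}^k z)_{k=0}^{n_x-1}$ is the observability map, which has full column rank because $(C_{\tt cl}, A_{\tt cl})$ is observable; hence $z$ ranges over a bounded set, say $\|z\| \leq R_n$, with $R_n$ determined by $Y_{\tt cl}$ and the observability map alone, uniformly over all feasible points.

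Finally I would combine the Schur property with the $\epsilon_n$-margin to close the argument. Since $A_{\tt cl}$ is Schur, $\|C_{\tt cl} A_{\tt cl}^k\| \leq M \lambda^k$ for some $M > 0$ and $\lambda \in (0,1)$, so $\|C_{\tt cl} A_{\tt cl}^k z\| \leq M \lambda^k R_n$ for every feasible point. Choosing $k^*(n) \geq n_x - 1$ large enough that $M \lambda^k R_n \leq \epsilon_n$ for all $k > k^*(n)$, every such constraint holds automatically: because $H^y_\infty v \in \tilde{Y}_n$, the output satisfies $H^y_\infty v + C_{\tt cl} A_{\tt cl}^k z \in \{H^y_\infty v\} \oplus \epsilon_n \mathcal{B}^{n_y} \subseteq \tilde{Y}_n \oplus \epsilon_n \mathcal{B}^{n_y} \subseteq Y_{{\tt cl},n,k}$, where the last inclusion is exactly the defining margin property of $\tilde{Y}_n$. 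This shows the constraints for $k > k^*(n)$ are redundant, yielding the reverse inclusion and hence equality. The main obstacle I anticipate is not any single step but the bookkeeping that makes the bound on $z$ genuinely uniform, in particular confirming that $Y_{{\tt cl},n,k} \oplus (-\tilde{Y}_n)$ is uniformly bounded in $k$, so that a single radius $R_n$ works for all feasible points and the threshold $k^*(n)$ can be chosen independently of $(v,\hat{x})$.
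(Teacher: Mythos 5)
Your proof is correct. Note that the paper itself supplies no argument for this proposition: it is imported wholesale by citation to Kolmanovsky and Gilbert (1995), so there is nothing in-paper to deviate from. Your reconstruction is essentially the classical finite-determination argument behind that reference (and behind Gilbert--Tan): the change of variables $z = \hat{x} - H^x_\infty v$, which is verified correctly since $H^x_k = (I - A_{\tt cl}^k)H^x_\infty$ gives $H^y_k v + C_{\tt cl}A_{\tt cl}^k \hat{x} = H^y_\infty v + C_{\tt cl}A_{\tt cl}^k z$; the uniform bound on $z$ via injectivity of the $n_x$-step observability map applied to the first $n_x$ constraints; and redundancy of the tail constraints from the geometric decay $\|C_{\tt cl}A_{\tt cl}^k z\| \leq M\lambda^k R_n \leq \epsilon_n$ combined with the margin property $\tilde{Y}_n \oplus \epsilon_n \mathcal{B}^{n_y} \subseteq Y_{{\tt cl},n,k}$, which is exactly the role that property plays in the construction. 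The only step worth tightening is your claim $Y_{{\tt cl},n,k} \subseteq Y_{\tt cl}$ (and hence $\tilde{Y}_n \subseteq Y_{\tt cl}$): since $Y_{{\tt cl},n,k} = Y_{\tt cl} \ominus \mathcal{E}^y(n,k)$, this inclusion requires $0 \in \mathcal{E}^y(n,k)$, i.e., $0 \in \Omega_m$ for the relevant indices. That holds for both constructions in the paper (origin-centered ellipsoids $e^\top P e \leq \xi_k$ and origin-symmetric polytopes $|L_j^\top e| \leq y_{e,{\tt max},k,j}$), but it is an appeal to the specific bounding sets rather than a formal consequence of the definitions; alternatively, $A \ominus B$ is always contained in a translate of $A$ by an element of $-B$, so mere compactness of the $\Omega_m$ yields a uniform compact superset of $Y_{{\tt cl},n,k} \oplus (-\tilde{Y}_n)$ and the same radius $R_n$, with no assumption on the origin. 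With that remark, your threshold $k^*(n)$ is well defined uniformly over feasible $(v,\hat{x})$ and the equality of the two sets follows as you state.
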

To construct $\tilde{O}_{\infty,n}$, $k$ is incremented, and new inequalities in (\ref{equ:prop11}) are added until newly added inequalities become redundant  
\cite{kolmanovsky1995maximal}.

The following proposition characterizes invariance properties of the sequence of sets $\{\tilde{O}_{\infty,k}\}_{k=0}^\infty$ under constant reference commands and ensures that by maintaining a constant reference command violation of the constraints can be avoided:
\begin{proposition} \label{prop:RecFeas} 
If $( v_n, \hat{x}_n)\in \tilde{O}_{\infty,n}$ then, for all $e_n \in \Omega_n$,
 $$\begin{bmatrix} v_n \\ A_{\tt cl} \hat{x}_n + B_{\tt cl} v_n + B_w e_n
\end{bmatrix} \in \tilde{O}_{\infty,n+1}$$.
\end{proposition}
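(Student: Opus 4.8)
The plan is to show that the propagated point $(v_n,\hat{x}_{n+1})$, with $\hat{x}_{n+1}=A_{\tt cl}\hat{x}_n+B_{\tt cl}v_n+B_w e_n$, satisfies both defining conditions of $\tilde{O}_{\infty,n+1}$ in (\ref{equ:prop11}). The first condition, $H^y_\infty v_n\in\tilde{Y}_{n+1}$, is immediate: membership $(v_n,\hat{x}_n)\in\tilde{O}_{\infty,n}$ gives $H^y_\infty v_n\in\tilde{Y}_n$, and the nesting property (\ref{equ:Yn}), $\tilde{Y}_n\subseteq\tilde{Y}_{n+1}$, finishes it.

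The substance lies in the second condition, $H^y_k v_n+C_{\tt cl}A_{\tt cl}^k\hat{x}_{n+1}\in Y_{{\tt cl},n+1,k}$ for all $k\geq0$. First I would substitute the expression for $\hat{x}_{n+1}$ and expand, obtaining $H^y_k v_n+C_{\tt cl}A_{\tt cl}^k B_{\tt cl}v_n+C_{\tt cl}A_{\tt cl}^{k+1}\hat{x}_n+C_{\tt cl}A_{\tt cl}^k B_w e_n$. The key algebraic step is to collapse the first two terms using the operator identity (\ref{equ:ivk3522-1}), $H^y_k+C_{\tt cl}A_{\tt cl}^k B_{\tt cl}=H^y_{k+1}$, so that the quantity becomes $\big(H^y_{k+1}v_n+C_{\tt cl}A_{\tt cl}^{k+1}\hat{x}_n\big)+C_{\tt cl}A_{\tt cl}^k B_w e_n$. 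This deliberately re-aligns the index: the $k$-th constraint at time $n+1$ is governed by the $(k+1)$-th constraint at time $n$, with one extra error term injected.

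Then I would invoke the membership of $(v_n,\hat{x}_n)$ in $\tilde{O}_{\infty,n}$ at index $k+1$, yielding $H^y_{k+1}v_n+C_{\tt cl}A_{\tt cl}^{k+1}\hat{x}_n\in Y_{{\tt cl},n,k+1}$, while $e_n\in\Omega_n$ gives $C_{\tt cl}A_{\tt cl}^k B_w e_n\in C_{\tt cl}A_{\tt cl}^k B_w\Omega_n$. Adding these, the propagated quantity lies in the Minkowski sum $Y_{{\tt cl},n,k+1}\oplus C_{\tt cl}A_{\tt cl}^k B_w\Omega_n$. Finally, the set-containment property (\ref{equ:ivk3522-2}), read at the shifted index $k+1$, states precisely that this sum is contained in $Y_{{\tt cl},n+1,k}$, which closes the argument.

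I expect the only real subtlety to be the index bookkeeping: one must apply (\ref{equ:ivk3522-2}) at $k+1$, which is legitimate since $k+1\geq1$ exactly when $k\geq0$, and correctly recognize that the $k$-th future constraint is controlled by the $(k+1)$-th present constraint together with the freshly introduced error term $B_w e_n\in B_w\Omega_n$. Because this index shift lands every $k\geq0$ in the regime where (\ref{equ:ivk3522-2}) holds, no separate treatment of the $k=0$ case is required, and the two precomputed properties (\ref{equ:ivk3522-1})–(\ref{equ:ivk3522-2}) do essentially all the work.
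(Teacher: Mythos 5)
Your proposal is correct and follows essentially the same route as the paper's proof: verify the steady-state condition via $\tilde{Y}_n\subseteq\tilde{Y}_{n+1}$, then combine the operator identity (\ref{equ:ivk3522-1}) with the set inclusion (\ref{equ:ivk3522-2}) to map the $(k+1)$-th constraint at time $n$ onto the $k$-th constraint at time $n+1$. The only difference is cosmetic indexing (you shift $k\to k+1$ where the paper writes $k-1\geq 0$), and your explicit use of $H^y_\infty$ in the first condition is in fact cleaner than the paper's statement of that step.
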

\begin{proof}  
Since $\tilde{Y}_n \subseteq \tilde{Y}_{n+1}$, 
\begin{equation*}
    H^y_0 v_n \in \tilde{Y}_n\Rightarrow H^y_0 v_n \in \tilde{Y}_{n+1}.
\end{equation*}
Recall that $( v_n, \hat{x}_n) \in \tilde{O}_{\infty,n}$ implies that
\begin{equation*} H^y_k v_n + C_{\tt cl} A_{\tt cl}^k \hat{x}_n
\in Y_{{\tt cl},n,k}\quad \text{for all }k \geq 0,
\end{equation*}
and consider the predicted system response,  after $(k-1)\geq 0$ steps, starting from the successor state $\hat x_{n+1}$ and keeping the reference command equal to $v_n$ over the prediction horizon.
Using (\ref{equ:ivk3522-1}) and (\ref{equ:ivk3522-2}), we have that
\begin{align*}
&C_{\tt cl} A_{\tt cl}^{k-1} \big(A_{\tt cl} \hat{x}_n+B_{\tt cl} v_n + B_w e_n\big) +
H^y_{k-1}v_n \\
&=C_{\tt cl}\big(A_{\tt cl}^{k}  \hat{x}_n+ A_{\tt cl}^{k-1} B_w e_n\big)+\big [C_{\tt cl} A_{\tt cl}^{k-1} B_{\tt cl} + H^y_{k-1}\big] v_n \\
&=C_{\tt cl} A_{\tt cl}^k \hat{x}_n+H^y_k v_n + C_{\tt cl} A_{\tt cl}^{k-1} B_w e_n \\
& \subseteq Y_{{\tt cl},n,k} \oplus C_{\tt cl} A_{\tt cl}^{k-1}B_w \Omega_n 
\subseteq Y_{{\tt cl},n+1,k-1}.
\end{align*}
Thus
$$\left[\begin{array}{c} v_n \\ A_{\tt cl} \hat{x}_n + B_{\tt cl} v_n + B_w e_n
\end{array} \right] \in \tilde{O}_{\infty,n+1}.$$
\end{proof}

Simplifications occur if the bounding set sequence $\{\Omega_k\}_{k=0}^\infty$
is eventually nested:

\begin{proposition} \label{prop:MOAS_cst1} 
Suppose that for all $n \geq \bar{n} \geq 0$, $\Omega_{n+1} \subseteq \Omega_n$ and $\tilde{Y}_{n} \subseteq \tilde{Y}_{n+1}$.  Then 
for all $n \geq \bar n$, 
$ k \geq 0,$ it follows that
$\mathcal E^x(n+1,k) \subseteq \mathcal E^x(n,k)$, 
$
\mathcal E^y(n+1,k) \subseteq \mathcal E^y(n,k),$ 
$Y_{{\tt cl},n,k} \subseteq Y_{{\tt cl},n+1,k}$, and $\tilde{O}_{\infty,n} \subseteq \tilde{O}_{\infty,n+1}$.
Furthermore, if 
for all $n \geq \bar{n} \geq 0$, $\Omega_{n+1} = \Omega_{\bar{n}}$ and $\tilde{Y}_{n} = \tilde{Y}_{\bar{n}}$, then 
it follows that
$\tilde{O}_{\infty,n} = \tilde{O}_{\infty,\bar{n}}$ for all $n \geq \bar n$.
\end{proposition}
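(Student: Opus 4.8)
The plan is to prove the four inclusions in the stated order, since each feeds the next, and then to specialize to the equality case. Throughout I would rely on three elementary set-operation facts: Minkowski sums are monotone (if $A_i \subseteq B_i$ then $\bigoplus_i A_i \subseteq \bigoplus_i B_i$), linear images preserve inclusion ($S \subseteq T$ implies $MS \subseteq MT$), and the Pontryagin difference is \emph{anti}-monotone in its second argument (if $B' \subseteq B$ then $A \ominus B \subseteq A \ominus B'$, directly from its definition). The hypothesis $\Omega_{n+1}\subseteq\Omega_n$ for $n\geq\bar n$ propagates to $\Omega_{n+1+i}\subseteq\Omega_{n+i}$ for every $i\geq 0$, because $n+i\geq\bar n$.

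First I would establish $\mathcal E^x(n+1,k)\subseteq\mathcal E^x(n,k)$. The case $k=0$ is trivial as both sets equal $\{0\}$. For $k>0$, I would compare the two Minkowski sums term by term: the $i$-th summand of $\mathcal E^x(n+1,k)$ is $A_{\tt cl}^{k-1-i}B_w\Omega_{n+1+i}$ while that of $\mathcal E^x(n,k)$ is $A_{\tt cl}^{k-1-i}B_w\Omega_{n+i}$, and the inclusion $\Omega_{n+1+i}\subseteq\Omega_{n+i}$ together with monotonicity of the linear image and of the Minkowski sum gives the claim. Next, $\mathcal E^y(n+1,k)\subseteq\mathcal E^y(n,k)$ follows at once by combining $D_w\Omega_{n+1+k}\subseteq D_w\Omega_{n+k}$ (again from $\Omega_{n+1+k}\subseteq\Omega_{n+k}$) with the just-proved $x$-inclusion and one more application of Minkowski-sum monotonicity.

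The third inclusion is the point where the direction reverses, and this is the one step demanding care. Writing $Y_{{\tt cl},n,k}=Y_{\tt cl}\ominus\mathcal E^y(n,k)$ (with $\mathcal E^y(n,0)=D_w\Omega_n$), anti-monotonicity of the Pontryagin difference applied to $\mathcal E^y(n+1,k)\subseteq\mathcal E^y(n,k)$ yields $Y_{{\tt cl},n,k}\subseteq Y_{{\tt cl},n+1,k}$. With this in hand, $\tilde O_{\infty,n}\subseteq\tilde O_{\infty,n+1}$ is a direct membership check using the infinite-intersection form of (\ref{equ:prop11}): given $(v,\hat x)\in\tilde O_{\infty,n}$, the constraint $H^y_\infty v\in\tilde Y_n\subseteq\tilde Y_{n+1}$ survives by the hypothesis on $\tilde Y$, and each constraint $H^y_k v+C_{\tt cl}A_{\tt cl}^k\hat x\in Y_{{\tt cl},n,k}\subseteq Y_{{\tt cl},n+1,k}$ survives by the third inclusion, so the point lies in $\tilde O_{\infty,n+1}$.

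For the final statement, the equalities $\Omega_n=\Omega_{\bar n}$ and $\tilde Y_n=\tilde Y_{\bar n}$ for all $n\geq\bar n$ make every set entering the definition of $\tilde O_{\infty,n}$ independent of $n$: since each $\Omega_{n+i}$ equals $\Omega_{\bar n}$, the sums $\mathcal E^x(n,k)$ and $\mathcal E^y(n,k)$, and hence each $Y_{{\tt cl},n,k}$, coincide with their values at $\bar n$, and $\tilde Y_n=\tilde Y_{\bar n}$ by assumption; therefore the defining constraint systems are identical and $\tilde O_{\infty,n}=\tilde O_{\infty,\bar n}$. I do not anticipate a genuine obstacle, since the argument is a cascade of monotonicity statements; the only thing requiring vigilance is keeping the inclusion directions straight, in particular remembering that passing from $\mathcal E^y$ to $Y_{\tt cl}$ through the Pontryagin difference flips the inclusion.
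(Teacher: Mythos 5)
Your proposal is correct and follows the same route the paper intends: the paper's proof is the one-line remark that the result ``follows directly from the definitions of the sets involved and the assumptions,'' and your cascade of Minkowski-sum monotonicity, linear-image monotonicity, and anti-monotonicity of the Pontryagin difference in its second argument is precisely the direct verification being alluded to. You have simply made explicit the details the paper leaves implicit, including the one genuinely direction-sensitive step (the inclusion flip when passing through $\ominus$) and the correct use of the infinite-constraint form of $\tilde{O}_{\infty,n}$ from (\ref{equ:prop11}).
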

\begin{proof}
The proof follows directly from the definitions of the sets involved and the assumptions.
\end{proof}

 Importantly, Proposition \ref{prop:MOAS_cst1} highlights conditions under which $\{\tilde{O}_{\infty,n}\}_{n=0}^{\infty}$ can be replaced by a finite sequence: If we replace the tail of the original sequence of error bounding sets $\{\Omega_k\}_{k=\bar n}^\infty$, $\bar n\geq 0$, by a nested sequence of sets $\{\tilde \Omega_k\}_{k=\bar n}^\infty$ such that $\Omega_k\subseteq \tilde \Omega_{k}$, then the associated $\{\tilde{O}_{\infty,n}\}_{n=0}^{\infty}$ is eventually nested. Moreover, if $\{\tilde\Omega_k\}_{k=\bar n}^\infty$ is a constant sequence, i.e. there is $\Omega_{\tt f}\subset\mathbb R^{n_x+n_w}$ such that $\tilde \Omega_n=\Omega_{\tt f}$ for all $n \geq \bar{n}$, then $\{\tilde{O}_{\infty,n}\}_{n=0}^{\infty}$ contains the same information as the finite sequence $\{\tilde{O}_{\infty,n}\}_{n=0}^{\bar n}$.  

Note that the sequence of $\tilde O_{\infty,n}$ does not depend on the current value of the state or disturbance. It can be computed and stored in memory ahead of time to reduce online computational burden.
\section{Reference governor}

At the time instant $n$, the reference governor solves the following problem:
\begin{equation}
\kappa_n^* = \max_{0\leq\kappa\leq1} \kappa \mbox{ subject to }~\begin{bmatrix} v_{n-1}+\kappa (r_n-v_{n-1}) \\ \hat{x}_n\end{bmatrix} \in \tilde{O}_{\infty,n}.\label{eq:kappaOpt}
\end{equation} 
The applied reference is then determined as
\begin{equation}
v_n=v_{n-1}+\kappa^*_{n}(r_n-v_{n-1}).\label{eq:RIncrement}
\end{equation}

The following proposition shows that when using a Reference Governor, constraints are enforced at all times
\begin{proposition} \label{prop:RG_constraints} 
Consider system (\ref{equ:1}) subject to constraints (\ref{equ:2})-(\ref{equ:3}) controlled through (\ref{eq:input_MatchedDist}) with $v_n$ selected through (\ref{eq:kappaOpt})-(\ref{eq:RIncrement}). If $(v_0,\hat{x}_0)\in \tilde{O}_{\infty,0}$ then $(x_{n},u_n) \in X\times U$ for all $n \geq 0$.
\end{proposition}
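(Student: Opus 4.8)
The plan is to prove the claim in two stages: first, a recursive-feasibility induction showing that the governor keeps $(v_n,\hat{x}_n)\in\tilde{O}_{\infty,n}$ for every $n\geq 0$, and second, an extraction of the pointwise constraints $(x_n,u_n)\in X\times U$ from the $k=0$ inequality defining $\tilde{O}_{\infty,n}$ together with the error bound $e_n\in\Omega_n$.

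For the induction, the base case $n=0$ is exactly the hypothesis $(v_0,\hat{x}_0)\in\tilde{O}_{\infty,0}$. Assuming $(v_n,\hat{x}_n)\in\tilde{O}_{\infty,n}$, I would invoke Proposition~\ref{prop:RecFeas}, which guarantees $(v_n,\,A_{\tt cl}\hat{x}_n+B_{\tt cl}v_n+B_w e_n)\in\tilde{O}_{\infty,n+1}$ for \emph{every} $e_n\in\Omega_n$. The realized observer error satisfies $e_n\in\Omega_n$ by construction of the bounding sequence (e.g., (\ref{equ:12}), with $e_0\in E_0\subseteq\Omega_0$), and the estimate evolves exactly as $\hat{x}_{n+1}=A_{\tt cl}\hat{x}_n+B_{\tt cl}v_n+B_w e_n$ per the prediction model (\ref{equ:6}); hence $(v_n,\hat{x}_{n+1})\in\tilde{O}_{\infty,n+1}$. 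In particular $\kappa=0$ (which keeps $v_{n+1}=v_n$) is feasible for the governor problem (\ref{eq:kappaOpt}) at time $n+1$, so the maximization is well posed and its optimizer satisfies $(v_{n+1},\hat{x}_{n+1})\in\tilde{O}_{\infty,n+1}$, closing the induction.

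For the second stage, I would fix $n$ and use $\tilde{O}_{\infty,n}\subseteq O_{\infty,n}$ to extract the $k=0$ condition. Since $H^x_0=0$ gives $H^y_0=D_{\tt cl}$ and $A_{\tt cl}^0=I$, membership reads $D_{\tt cl}v_n+C_{\tt cl}\hat{x}_n\in Y_{{\tt cl},n,0}=Y_{\tt cl}\ominus D_w\Omega_n$. By the definition of the Pontryagin difference and $e_n\in\Omega_n$, this yields that the prediction-model output (\ref{equ:6.1}) satisfies $y_{{\tt cl},n}=C_{\tt cl}\hat{x}_n+D_{\tt cl}v_n+D_w e_n\in Y_{\tt cl}$. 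I would then unpack $Y_{\tt cl}=X\times(U\ominus(-W))$ blockwise: the first $n_x$ rows give $\hat{x}_n+e_{x,n}=x_n\in X$, while the remaining block, using the control law (\ref{eq:input_MatchedDist}) and $e_{w,n}=w_n-\hat{w}_n$, equals $K\hat{x}_n+\Gamma v_n+e_{w,n}=u_n+w_n$, so $u_n+w_n\in U\ominus(-W)$; since $-w_n\in-W$, the definition of the Pontryagin difference immediately gives $u_n\in U$. (In the variant of the Remark, $Y_{\tt cl}=X\times U$ and the same block gives $u_n+w_n\in U$ directly.)

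The main obstacle is the induction rather than the extraction: everything rests on propagating feasibility through Proposition~\ref{prop:RecFeas}, which is legitimate only because the true error $e_n$ lies in the \emph{same} set $\Omega_n$ that was subtracted (via the Pontryagin differences in (\ref{equ:ineqj2})) when tightening the constraints into $Y_{{\tt cl},n,k}$, and because the plant-plus-observer closed loop reproduces the nominal update (\ref{equ:6}) used to build $O_{\infty,n}$. Once feasibility is secured at every step, the constraint extraction is routine Pontryagin-difference bookkeeping.
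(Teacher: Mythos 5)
Your proposal is correct and follows essentially the same route as the paper's proof: recursive feasibility of $(v_n,\hat{x}_n)\in\tilde{O}_{\infty,n}$ via Proposition~\ref{prop:RecFeas} (with $\kappa=0$, i.e.\ $v_n=v_{n-1}$, as the fallback feasible choice), followed by extraction of the pointwise constraints from the $k=0$ condition $D_{\tt cl}v_n+C_{\tt cl}\hat{x}_n\in Y_{\tt cl}\ominus D_w\Omega_n$ together with $e_n\in\Omega_n$. Your write-up is merely more explicit than the paper's in two places that it leaves implicit --- verifying that the true estimate update coincides with the prediction model (\ref{equ:6}) so Proposition~\ref{prop:RecFeas} applies to the realized trajectory, and the blockwise unpacking of $Y_{\tt cl}=X\times(U\ominus(-W))$ showing $u_n+w_n\in U\ominus(-W)$ implies $u_n\in U$ --- both of which check out.
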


\begin{proof}
From (\ref{eq:kappaOpt}) and noting that   $v_n=v_{n-1}$
    is a feasible choice per Proposition~\ref{prop:RecFeas},
the reference governor maintains $(v_n,\hat{x}_n)\in \tilde{O}_{\infty,n}$.  Conditions in (\ref{equ:prop11}) for $k=0$ imply $D_{\tt cl} v_n+C_{\tt cl} \hat{x}_n+D_w \Omega_n \subseteq Y_{\tt cl}$ and since $e_n \in \Omega_n$, it follows that $y_{{\tt cl},n}=D_{\tt cl} v_n+C_{\tt cl} \hat{x}_n+D_w e_n \in Y_{\tt cl}$ and thus $(x_{n},u_n) \in X\times U$, for all $n \geq 0$.
\end{proof}

The next proposition establishes, under suitable assumptions, the finite-time convergence of $v_n$ to $r_n$:
\begin{proposition} \label{prop:RG_convergence} 
Suppose $A_{\tt cl}$ is Schur, $(v_0,\hat{x}_0)\in \tilde{O}_{\infty,0}$, and $r_n = r$ for all $n$ sufficiently large.
Suppose, furthermore, there exists $\bar{n}>0$ such that $\Omega_n \subseteq \Omega_{\bar{n}}$ and (\ref{equ:Yn}) holds 
 for all $n \geq \bar{n}$, where $\tilde{Y}_{\bar{n}}$
is convex and $\Omega_{\bar{n}}$ is compact,
and
$H_\infty^y r \in \tilde{Y}_{\bar{n}}$. 
Suppose also that the set,
$\bar{O}_\infty=\{(v,\hat{x}):~ H_\infty^y v \in \tilde{Y}_{\bar{n}},~
H_k^y v+ C_{\tt cl} A_{\tt cl}^k \hat{x} \in Y_{\tt cl} \ominus D_w \Omega_{\bar{n}}
\ominus_{i = 0}^{\infty} C_{\tt cl} 
A_{\tt cl}^{i} B_w \Omega_{\bar{n}},~k=0,1,\cdots 
\}
$ 
is nonempty.
Then (\ref{eq:kappaOpt})-(\ref{eq:RIncrement}) ensures that $v_n=r$ for all $n$ sufficiently large.
\end{proposition}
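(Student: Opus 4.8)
The plan is to treat $\bar{O}_\infty$ as a time-invariant, robustly constraint-admissible inner approximation of the time-varying family $\tilde{O}_{\infty,n}$, and then to combine monotone convergence of the command, the Schur property of $A_{\tt cl}$, and a strict-feasibility (interior) property of the target steady state. The first step is to establish the containment $\bar{O}_\infty \subseteq \tilde{O}_{\infty,n}$ for all $n \geq \bar{n}$. Indeed, for $n \geq \bar{n}$ we have $\tilde{Y}_{\bar{n}} \subseteq \tilde{Y}_n$ by (\ref{equ:Yn}), so $H^y_\infty v \in \tilde{Y}_{\bar{n}}$ implies $H^y_\infty v \in \tilde{Y}_n$; and since $\Omega_{n+j} \subseteq \Omega_{\bar{n}}$ for every $j \geq 0$, each set subtracted in $Y_{{\tt cl},n,k}$ is contained in its $\Omega_{\bar{n}}$-counterpart, while the steady constraint set $S_{\bar{n}} = Y_{\tt cl} \ominus D_w \Omega_{\bar{n}} \ominus_{i=0}^{\infty} C_{\tt cl} A_{\tt cl}^i B_w \Omega_{\bar{n}}$ in addition subtracts the whole tail $i \geq k$. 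By monotonicity of the Pontryagin difference this gives $S_{\bar{n}} \subseteq Y_{{\tt cl},n,k}$ for each $k$, so membership in $\bar{O}_\infty$ transfers to $\tilde{O}_{\infty,n}$. Consequently any command feasible from a given state in $\bar{O}_\infty$ is feasible in the reference-governor problem (\ref{eq:kappaOpt}) at time $n$.

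Next, once $r_n \equiv r$ (say for $n \geq N_0 \geq \bar{n}$), the update (\ref{eq:RIncrement}) keeps $v_n$ on the segment $[v_{n-1},r]$ with $\|r - v_n\| = (1-\kappa^*_n)\,\|r - v_{n-1}\|$ nonincreasing, so $v_n \to v_\infty$ for some $v_\infty$ on $[v_{N_0},r]$. Because $A_{\tt cl}$ is Schur, $v_n \to v_\infty$, and $e_n \in \Omega_n \subseteq \Omega_{\bar{n}}$ is bounded, the observer state $\hat{x}_n$ enters and remains in every neighborhood of the robust tube $H^x_\infty v_\infty \oplus F_\infty$, where $F_\infty = \bigoplus_{i=0}^{\infty} A_{\tt cl}^i B_w \Omega_{\bar{n}}$ is the minimal robust invariant set of $\hat{x}_{k+1} = A_{\tt cl}\hat{x}_k + B_w e_k$.

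For target feasibility I would show $(r, H^x_\infty r) \in \bar{O}_\infty$. Using $H^x_k = H^x_\infty - A_{\tt cl}^k H^x_\infty$, one checks that $H^y_k r + C_{\tt cl} A_{\tt cl}^k H^x_\infty r = H^y_\infty r$ for every $k$, so both defining conditions of $\bar{O}_\infty$ collapse to $H^y_\infty r \in \tilde{Y}_{\bar{n}}$ (assumed) together with $H^y_\infty r \in S_{\bar{n}}$, the latter following from the construction of $\tilde{Y}_{\bar{n}}$. The $\epsilon$-margin built into $\tilde{Y}_n$, together with nonemptiness and convexity of $\bar{O}_\infty$, then supplies a uniform interior margin, so that a neighborhood of $\{r\} \times (H^x_\infty r \oplus F_\infty)$ lies in $\bar{O}_\infty$, and $\bar{O}_\infty$ is robustly positively invariant under the constant command $r$ for disturbances in $\Omega_{\bar{n}}$ (the time-invariant analogue of Proposition \ref{prop:RecFeas}).

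Finally I would argue by contradiction: if $v_\infty \neq r$ then $\kappa^*_n \to 0$, and I would use convexity and robust invariance of $\bar{O}_\infty$ with the interior margin to exhibit a step $\delta > 0$, independent of $n$, such that $(v_{n-1} + \delta(r - v_{n-1}),\, \hat{x}_n) \in \bar{O}_\infty \subseteq \tilde{O}_{\infty,n}$ for all large $n$, forcing $\kappa^*_n \geq \delta$ and hence $v_\infty = r$. The margin then upgrades this to finite time: once $v_{n-1}$ is close enough to $r$ and $\hat{x}_n$ close enough to the tube, $(r,\hat{x}_n) \in \tilde{O}_{\infty,n}$, so $\kappa^*_n = 1$ and $v_n = r$; Proposition \ref{prop:RecFeas} with constant command $r$ keeps $\kappa = 1$ feasible thereafter, giving $v_m = r$ for all $m \geq n$. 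The main obstacle is exactly this last reconciliation: since the disturbance is persistent, $\hat{x}_n$ converges to a tube around $H^x_\infty v_\infty$ rather than to $H^x_\infty r$, so feasibility of moving the command toward $r$ cannot be read off directly from the state's limit, and producing a uniformly positive feasible increment (and eventually the full step) requires careful use of the convexity and robust invariance of $\bar{O}_\infty$ and of the strict margin furnished by the $\epsilon$-tightening and the compactness of $\Omega_{\bar{n}}$.
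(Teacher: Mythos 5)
Your proposal is correct and follows essentially the same route as the paper's proof: monotone progression of $v_n$ along the segment toward $r$ with $\Delta v_n \to 0$, convergence of $\hat{x}_n$ to the tube $\{H^x_\infty v_n\} \oplus \bar{F}$ with $\bar{F} = \bigoplus_{i=0}^\infty A_{\tt cl}^i B_w \Omega_{\bar{n}}$, an interior-margin inner set contained in $\bar{O}_\infty \subseteq \tilde{O}_{\infty,n}$, and a uniformly bounded-below feasible command step that contradicts $\Delta v_n \to 0$ unless $v_n = r$ in finite time, with Proposition~\ref{prop:RecFeas} locking in $v_m = r$ thereafter. The ``main obstacle'' you flag is resolved in the paper exactly along the lines you anticipate, via the inflated tube over \emph{all} admissible commands, $L_\infty = \left\{(v,\hat{x}): H_\infty^y v \in \tilde{Y}_{\bar{n}},\ \hat{x} \in \{H_\infty^x v\} \oplus \bar{F} \oplus \varepsilon \mathcal{B}^{n_x}\right\} \subseteq \bar{O}_\infty$ (citing Theorem~7.1 of \cite{kolmanovsky1998theory}), convexity of $\tilde{Y}_{\bar{n}}$, and the explicit step bound $\delta = \varepsilon/(2\|H_\infty^x\|)$, which together do the work you assign to convexity and robust invariance of $\bar{O}_\infty$.
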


\begin{proof}
Without loss of generality, suppose $r_n=r$ for all $n \geq 0$.
From (\ref{eq:RIncrement}), $\{v_n\}_{n=0}^\infty$ is a sequence of points monotonically progressing along the line connecting $v_0$ and $r$; hence it is convergent and $\Delta v_n=v_{n}-v_{n-1} \to 0$ as $n \to \infty$.   Defining $\Delta\hat x_n = \hat x_n - H^x_\infty v_n$ and using (\ref{equ:6}) and
$A_{\tt cl} H_\infty^x+B_{\tt cl} = H_\infty^x$,
leads to $\Delta \hat x_{n+1} = A_{\tt cl}\Delta \hat x_n + H^x_\infty \Delta v_n + B_w e_n$. Note that 
$e_n \in \Omega_{{n}} \subseteq \Omega_{\bar{n}}$
for $n \geq \bar{n}$.
Define $\bar{F}=\bigoplus_{i=0}^\infty A_{\tt cl}^i B_w
{\Omega}_{\bar{n}}$ and note that $\bar{F}$ is compact as $\Omega_{\bar{n}}$ is compact and $A_{\tt cl}$ is Schur.  Since $\Delta v_n \to 0$
and $A_{\tt cl}$ is Schur, it follows that $\Delta \hat x_{n} \to \bar{F}$  and $ \hat x_n \to 
\{H^x_\infty v_n\} \bigoplus \bar{F}$ as $n \to \infty$.

Note that $\bar{O}_\infty \subseteq \tilde{O}_{\infty,n}$ for $n \geq \bar{n}$. Additionally,
 for sufficiently small $\varepsilon>0$ it follows that \cite[Theorem 7.1]{kolmanovsky1998theory}:
\begin{align*}
&L_\infty \subseteq \bar{O}_\infty,\text{where}\\
&L_\infty=\left\{(v,\hat{x}):~H_\infty^y v \in \tilde{Y}_{\bar{n}}, \hat{x} \in \{H_\infty^x v\} \oplus \bar{F} \oplus \varepsilon \mathcal{B}^{n_x} \right\}. 
\end{align*}

Given $\varepsilon>0$ such that $L_\infty \subseteq \bar{O}_\infty \subseteq \tilde{O}_{\infty,n}$, the convergence 
 $\hat x_n \to 
\{H^x_\infty v_n\} \oplus \bar{F}$ as $n \to \infty$ implies that for any sufficiently large $n \geq \bar{n}$,
$$\hat{x}_{n-1} \in \{ H_\infty^x v_{n-1} \} 
\oplus \bar{F} \oplus 
\frac{\varepsilon}{2 \|A_{\tt cl}\|}.$$
Then, using $A_{\tt cl} H_\infty^x+B_{\tt cl}=H_\infty^x$,  $A_{\tt cl} \bar{F} +B_w {\Omega}_{\bar{n}} \subseteq \bar{F}$ yields 
\begin{eqnarray*}
\hat{x}_n & = & A_{\tt cl} \hat{x}_{n-1}+B_{\tt cl} v_{n-1} +B_w e_{n-1} \\
& \in & (A_{\tt cl} H_\infty^x + B_{\tt cl}) v_{n-1} \oplus (A_{\tt cl} \bar{F} +B_w 
{\Omega}_{\bar{n}}) \oplus \frac{\varepsilon}{2} {\mathcal B}^{n_x} 
\\
&  \subseteq & 
\{ H_\infty^x v_{n-1} \} \oplus \bar{F} \oplus \frac{\varepsilon}{2}  {\mathcal B}^{n_x}.
\end{eqnarray*}
Hence, for any $v_n$ such that \begin{equation}\label{equ:choice}\|v_n-v_{n-1}\|\leq \delta= \frac{\varepsilon}{2 \|H_\infty^x\|} \end{equation} 
it follows that $\hat{x}_n \in  \{H_\infty^x v_{n}\} \oplus \bar{F} \oplus {\varepsilon}  {\mathcal B}^{n_x}. $

Since $H_\infty^y v_{n-1} \in \tilde{Y}_{\bar{n}}$,
$H_\infty^y r \in \tilde{Y}_{\bar{n}}$,
and $\tilde{Y}_{\bar{n}}$ is convex, for any such $v_n=v_{n-1}(1-\kappa)+r \kappa$, $0 \leq \kappa \leq 1$, 
$H_\infty^y v_n \in \tilde{Y}_{\bar{n}}$.
Thus any choice of $v_n$ satisfying (\ref{equ:choice}) and $v_n=v_{n-1}(1-\kappa)+r \kappa$, $0 \leq \kappa \leq 1$,
results in $(v_n,\hat{x}_n) \in \tilde{O}_{\infty,n}$ and is feasible.  Since $v_n$ is chosen using maximization of $\kappa_n$ in (\ref{eq:kappaOpt}) the possibility of taking bounded from below steps in $v_n$ for arbitrary large $n$
contradicts $v_n - v_{n-1} \to 0$ unless $v_n=r$ for some finite $n$.
\end{proof}

\section{Unmatched Disturbances}
\label{sec:unmatchedDist}
In this section we describe an approach to handle  a class of constrained systems where the control input and disturbance input are not matched, with the model given by
\begin{equation}\label{equ:ud1}
	x_{n+1} = Ax_n + Bu_n + B_1 w_n,
	\end{equation}
where $n_u$ might differ from $n_w$. Suppose the constraints are imposed on a vector of chosen outputs of the system,
\begin{equation}\label{equ:ud1.5}
	z_n = Hx_n \in Z,
\end{equation}
and assume that $range\{HB_1\}\subseteq \;range\{HB\}$ while the matrix $HB$ has full column rank so that  $(HB (HB)^\top)$ is invertible.  Define
\begin{equation}\label{equ:ud2}
	(HB)^\dagger=(HB)^\top (HB (HB)^\top)^{-1},
\end{equation}
so that $(HB) (HB)^\dagger=I$ and the control
\begin{equation}\label{equ:ud3}
	u_n=(HB)^\dagger\left(-HA \hat{x}_n-HB_1 \hat{w}_n + \Lambda \hat{z}_n +\Gamma v_n \right),
\end{equation}
where $\Lambda$ is a Schur matrix.
Based on the observer (\ref{eq:input_MatchedDist.1}), 
\begin{equation*}
	\hat{z}_{n+1}  = H \hat{x}_{n+1} = HA \hat{x}_n + HB u_n + HB_1 \hat{w}_n+ HL_xC_z e_{x,n}. 
 \end{equation*}
After replacing in (\ref{equ:ud3}), we obtain
\begin{align}
	\hat{z}_{n+1}& = \Lambda \hat{z}_n+\left[\begin{array}{cc} H L_x C & 0 \end{array} \right]e_n+\Gamma v_n,	\\
	z &= \hat{z}_n +\left[\begin{array}{cc} H & 0_{n_z\times n_w} \end{array} \right]e_n \in Z.	
\end{align}
This fits the model (\ref{eq:predModFin}) with $\hat{x} = \hat{z}$,  $y_{\tt cl}=z$,
$A_{\tt cl}=\Lambda$, $B_{\tt cl}=\Gamma$,
$B_w=\left[\begin{array}{cc} H L_1 C & 0 \end{array} \right]$,
$D_w=\left[\begin{array}{cc} H & 0 \end{array} \right]$, $C_{\tt cl}=I_{n_z \times n_z}$, $D_{\tt cl}=0$
and $Y_{\tt cl}=Z$. Note that 
in the Luenberger observer, the matrix
 $A_{\tt ext}$ matrix needs to be modified as
\begin{equation}
    A_{\tt ext} =  \begin{bmatrix}A &  B_1\\0_{n_w\times n_x}&I_{n_w}\end{bmatrix}
    \label{eq:A_obs_unmatched}
\end{equation}
\begin{remark}
For the particular case where $range\{B_1\}\subseteq range\{B\}$ and $B$ has full column rank one can consider the full dynamics as well as constraints on inputs and all states by applying the developments of section \ref{sec:matchedDist}, considering $-B^\dagger B_1\hat w_n$ as the disturbance cancellation term in the control law (\ref{eq:input_MatchedDist}) and defining $A_{\tt ext}$ following equation (\ref{eq:A_obs_unmatched}).
\end{remark}

\section{Simulation results}

\subsection{ Comparison of different error bounding schemes}

In order to illustrate how the choice of the error bounding affects the performance of the presented robust reference governor we first consider a mass-spring-damper system,
\begin{equation} m \ddot{x}_1+c_d \dot{x}_1+k x_1=(u+w),\end{equation}
where $x_1$ denotes the position, $x_2=\dot{x}$ denotes the velocity,
$u$ is the control force and $w$ is the disturbance force.
  In this example we consider that the open-loop system is lightly damped with $k=1.4400$, $c_d=0.24$ and $m=1$.
The system model is converted to discrete time using a sampling period of $T_s=0.2$ sec.  We consider that only the position of the mass can be measured, therefore $C=[1,~0]$.
The Luenberger observer is designed with $L=\begin{bmatrix}
    0.3796 & 0.2524 & 0.2491
\end{bmatrix}$. 
We assume that $E_0=\{0\} \times \{0\} \times [-0.4,\;0.4]$, and $\tilde{W}= [-0.001,~0.001]$.  Constraints are imposed on the position as $|x_1| \leq 1.2$, while control constraints are not considered. An LQR controller has been designed leading to $K= [-0.3962,~ -0.8888]$ and a reference tracking feedforward gain 
$\Gamma = 1.8362$.

We next compare the polyhedral and ellipsoidal approaches for the error bounding. For the polyhedral bounding $n_L=100$ normalized directions were chosen at random and $\Omega_n$ was computed for $n=0,1,\cdots, 400$; as it appeared that $\Omega_n \subseteq \Omega_{200}$ for $n>200$, we set $\Omega_{\tt f} = \Omega_{200}$ and $\Omega_n=\Omega_{\tt f}$, $n\geq 200$.
Similarly, the ellipsoidal bounds were computed for $n=0,\cdots,200$ and $\Omega_{\tt f}=\Omega_{200}$ was chosen. Note that for these disturbance bounds, error bounds, system and observer choice, the sequence of $\xi_k$ defined in (\ref{eq:GammaDef}) is decreasing and $\xi_{200}\approx \frac{\rho}{1-\mu}$. 

Additionally, we define $\tilde Y_{n} = Y_{\tt cl }\ominus\mathcal E^y(n,200)\ominus \epsilon\mathcal B^{ny}$ where $\epsilon = 0.01$.
The sequence $\tilde{O}_{\infty,n}$ is then constructed for $n=0,1,\cdots,200$ and by Proposition \ref{prop:MOAS_cst1}, $\tilde{O}_{\infty,n}=\tilde{O}_{\infty,200}$ for $n > 200$. 

As shown in Figure~\ref{fig:damper_pos}, irrespective of the choice of the error bounding sets, the constraints are strictly enforced at all times. Nevertheless, the ellipsoidal  bounding leads to a more conservative response as the reference command is responding much slower and reaches lower values over the first $10$ seconds when compared with the polyhedral error bounding case. Similarly, we observe that the final error bound is larger for the ellipsoidal bounding than the polyhedral bounding as seen by the reference command values $v_n$ never reaching $r_n$.

\begin{figure}[htb]
	\begin{center}
		\includegraphics[scale=0.6]{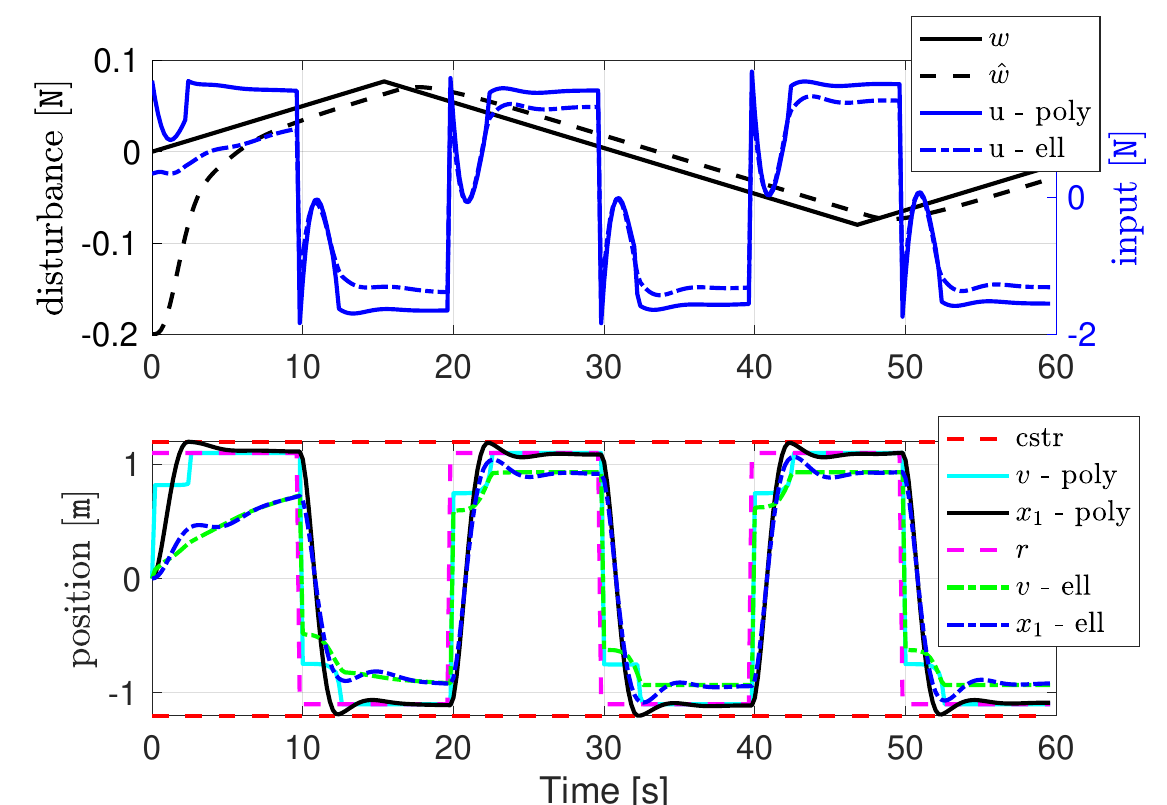} 
		\caption{Top: Time histories of the actual and estimated disturbance and of the control inputs 
         for the polyhedral and ellipsoidal bounding-based RGs. Bottom:  Time histories of the actual and estimated position, desired reference and modified by RG reference for the polyhedral and ellipsoidal bounding based RGs.}
		\label{fig:damper_pos}
	\end{center}
\end{figure}

The difference between the two bounding methods is evident in Figure \ref{fig:damper_sets} where we observe that $\Omega_{200}$ obtained using the polyhedral approach is much smaller than the one obtained using the ellipsoidal approach. It is worth mentioning that even though the ellipsoidal bounding approach is more conservative compared to the polyhedral one, it is computationally less demanding and produces a family of nested error bounding sets.

\begin{figure}[htb]
	\begin{center}
		\includegraphics[scale=0.8]{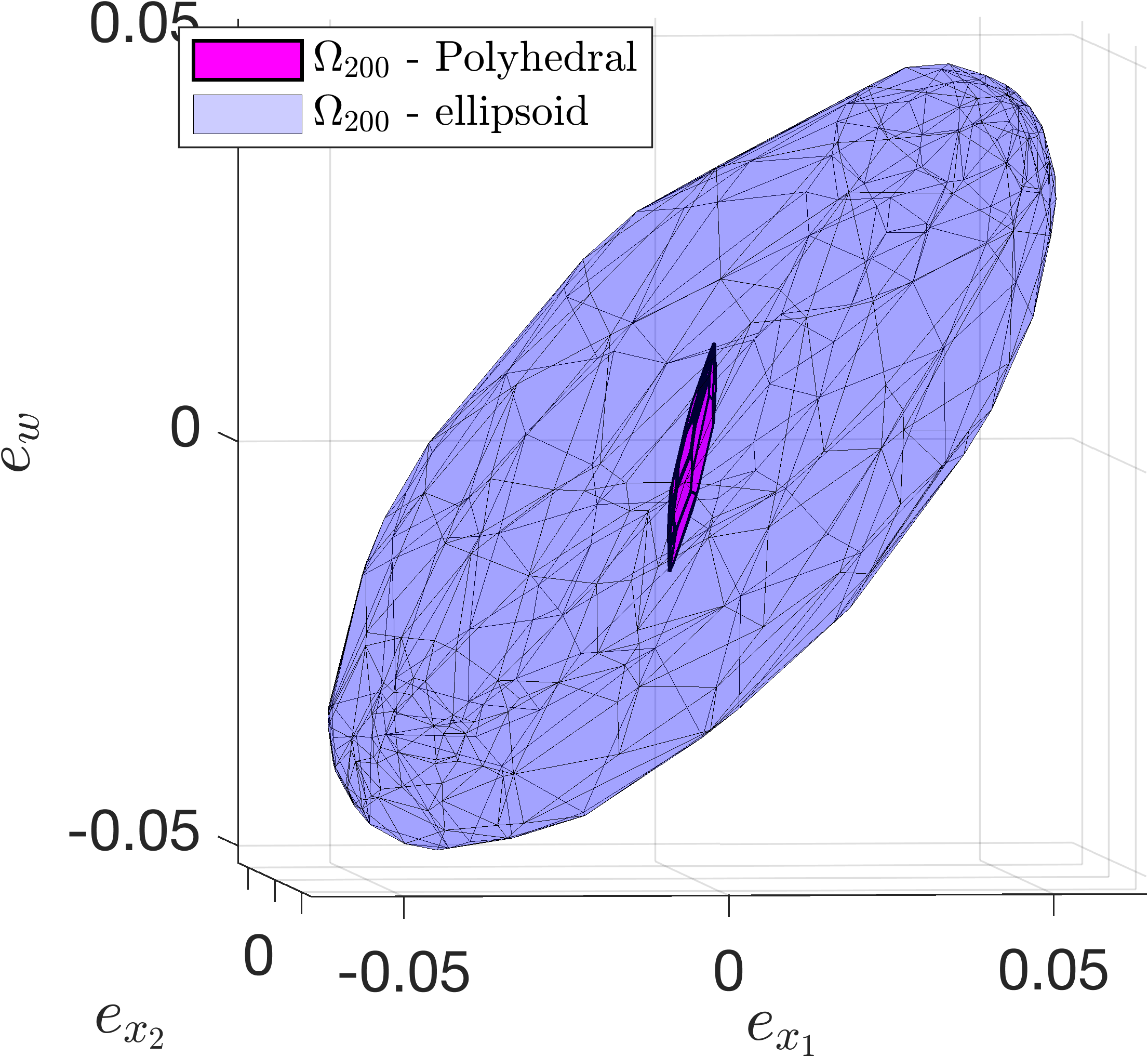}
		\caption{Error bounding sets $\Omega_{200}$ obtained using ellipsoidal and polyhedral error bounding.}
		\label{fig:damper_sets}
	\end{center}
\end{figure}

\subsection{Short period dynamics of transport aircraft}

We now compare the proposed RG to a more classical robust RG such as the one described in \cite[Section 2.2]{garone2017reference}. In this section we will refer to the former as case a) and to the latter as case b). For the comparison, we consider the short period dynamics of a generic aircraft flying at 0.8 $\tt Ma$ and 30000 ${\tt ft}$ modeled by \cite{poloni2014disturbance}
\begin{equation}
    \begin{bmatrix}
    \dot \alpha\\\dot \theta\\ \dot q
    \end{bmatrix} = \begin{bmatrix}
        -0.7018& 0 &0.9761\\
        0 & 0 & 1\\
        -2.6923& 0 &-0.7322
    \end{bmatrix}\begin{bmatrix}
        \alpha\\ \theta\\ q
    \end{bmatrix} + \begin{bmatrix}
        -0.0573\\0\\-3.5352
    \end{bmatrix}(\delta_e + w)
\end{equation}
where $\alpha\units{rad}$ is the angle of attack (AoA), $\theta\units{rad}$  the pitch angle, $q\units{rad\; s^{-1}}$ is the pitch rate, $\delta_e \units{rad}$ is the elevator deflection and $w$ is a matched disturbance input. The aircraft is to follow a pitch angle reference command.

The system is subject to constraints on both states and the sum of input and disturbance specified as follows: $|\alpha|\leq 2^\circ$ and $|\delta_e|\leq 40^\circ$. The model is converted to discrete time using a sampling period $T_s = 20{\tt  ms}$. 

The AoA and pitch angle are assumed to be measured and a Luenberger observer is designed with
\begin{equation*}
    L = \begin{bmatrix}
        0.1752 &    0.0201    &    0.2121&   -0.1184 \\
        0.0201  & 0.1881 &    0.2431&  -0.1123
\end{bmatrix}^\top .
\end{equation*}

The disturbance, in $\tt rad$, and its rate of change are bounded so that
$W=[-0.45,0.45]$ and $\tilde{W}=[-0.0004,0.0004]$.  The initial estimation errors of the states and of the disturbance are assumed to be bounded in $E_0 = [-10^{-10},10^{-10}] \times$ $[-10^{-10},10^{-10}] \times$ 
 $[-10^{-5},10^{-5}] \times$ $ [-10^{-2},10^{-2}],$
where error values relating to angles are specified in radians. 

A stabilizing LQR controller (\ref{eq:input_MatchedDist}) is designed with $$K =\begin{bmatrix} 4.0441 & 103.2201 & 7.5627\end{bmatrix}$$ and the feedforward gain $\Gamma = -103.22$. 

In case a), the error bounds are computed using the polyhedral 
bounding approach, with $n_L = 100$ directions 
selected at random in $\mathbb R^{n_x+n_w}$. 
A ``terminal'' error bounding set $\Omega_{\tt f} \supseteq \bigcup_{n=200}^{250}\Omega_k$ was chosen for which the set inclusion, $\Omega_n \subseteq\Omega_{\tt f}$, $n=200,\cdots, 600,$ was verified.  Then for $n \geq 200$, $\Omega_n$ was set to $\Omega_{\tt f}$. 
A vertex-representation of $\Omega_{\tt f}$ was obtained by selecting all vertices of the sets $\Omega_n$, $n =200,\cdots,250$ and deleting the ones strictly inside the convex hull of all vertices. Sets $\tilde Y_{n}$ are defined as $\tilde Y_{n} = Y_{\tt cl }\ominus\mathcal E^y(n,400)\ominus \epsilon\mathcal B^{ny}$ where $\epsilon = 0.01$.
The sequence $\tilde{O}_{\infty,n}$ was constructed for $n=0,1,\cdots,200$ and, by Proposition \ref{prop:MOAS_cst1},
$\tilde{O}_{\infty,n}=\tilde{O}_{\infty,200}$ for $n > 200$.

Figures \ref{fig:shrtDyns_Rtrack}, \ref{fig:shrtDyns_uHist} show the time histories during a pitch angle reference tracking with RG in both cases a) and  b). In both cases, the RG ensures that constraints are respected even in the presence of disturbances. Without RG, constraints on both input and state were violated. In this example, it is clear that the proposed RG in case a) is less conservative that the conventional robust RG in case b), as shown by the slower responding reference command and the fact that states and inputs are further away from the constraint boundary.

\begin{figure}[ht]
\centering
\includegraphics[scale=0.65]{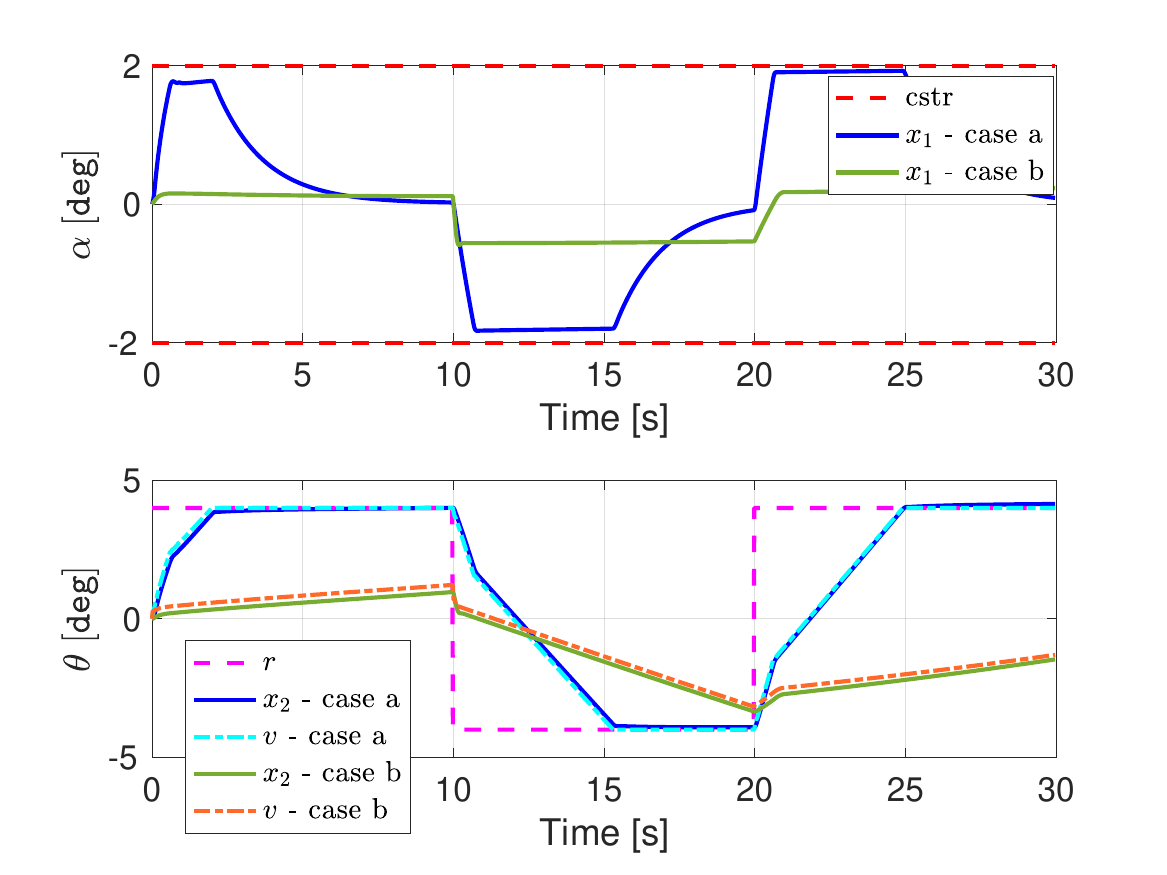}
\caption{Time histories of states, constraints,  desired reference and referencemodified by RG reference for the aircraft short period dynamics with RG based on disturbance estimation (case a) and conventional robust RG (case b). }
\label{fig:shrtDyns_Rtrack}
\end{figure}

\begin{figure}[ht]
\centering
\includegraphics[scale=0.65]{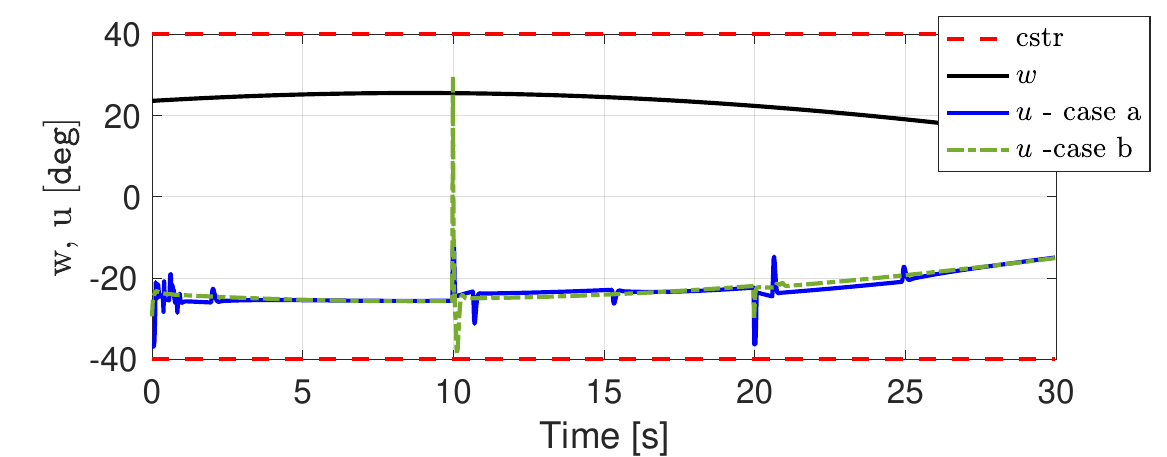}
\caption{Time histories of the input, disturbance and constraints for the aircraft short period dynamics with RG making use of disturbance estimation (case a) and conventional robust RG (case b). }
\label{fig:shrtDyns_uHist}
\end{figure}

Figure \ref{fig:shrtDyns_OmegaPoly} shows projections of several instances of the error bounding sets $\Omega_n$ illustrating that the inclusion $\Omega_{n+1}\subseteq\Omega_n$ does not hold, in general. Moreover,  the error bounding sets are not strictly decreasing in volume either.
This occurs due to the propagation of the initial error (primarily in the disturbance estimate) to other state estimates due to the error dynamics.

\begin{figure}[ht]
\centering
\includegraphics[scale=0.8]{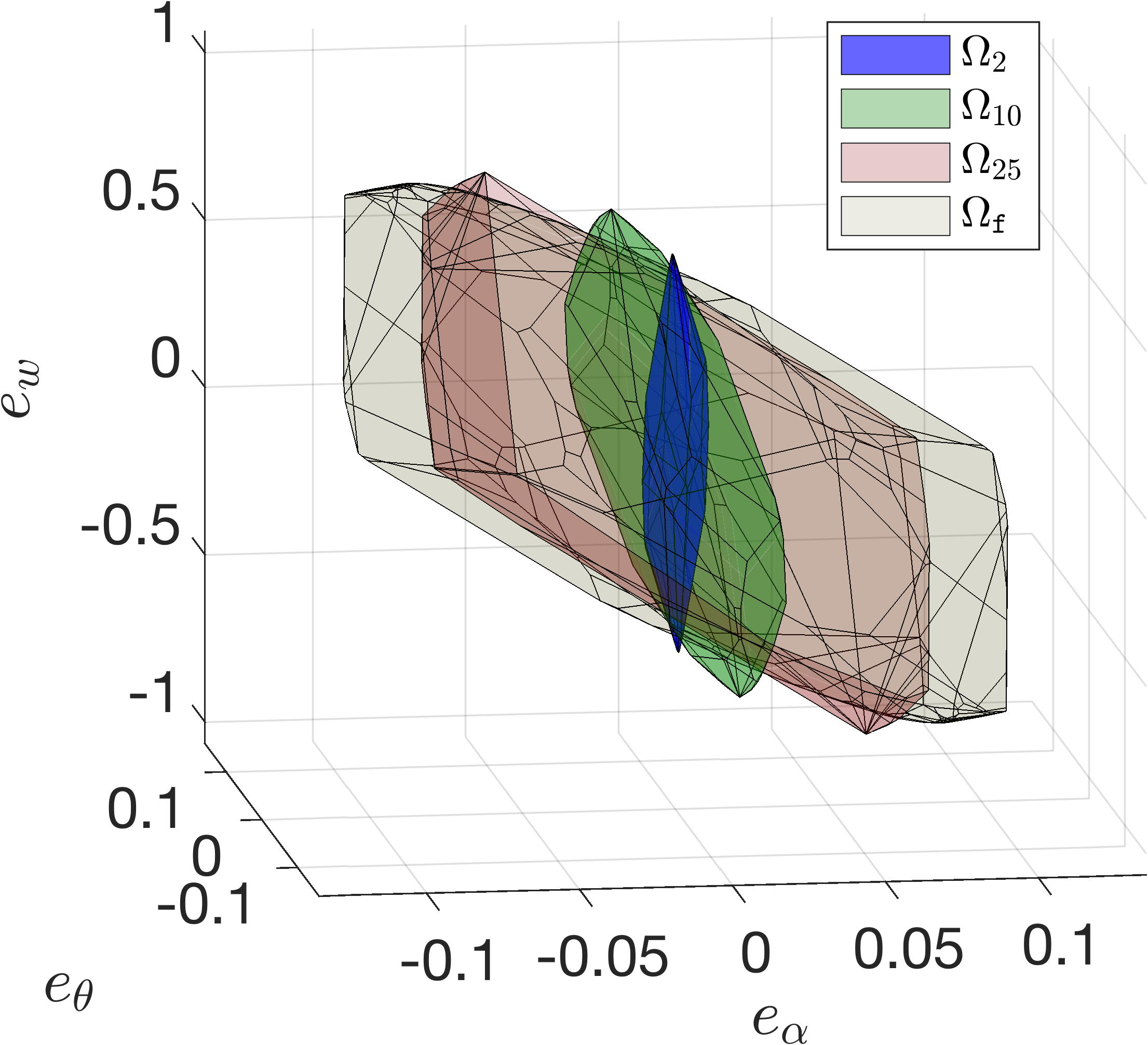}
\caption{Projections of several error bounding sets $\Omega_n$ onto the $(e_\alpha,e_\theta,e_w)$ error space.}
\label{fig:shrtDyns_OmegaPoly}
\end{figure}

\subsection{Aircraft under icing conditions }
As an illustration of the case when the disturbance and control inputs are not matched (Section \ref{sec:unmatchedDist}), we consider the linearized longitudinal dynamics for a generic transport aircraft at 168 ${\tt  fps}$ and $8000 {\tt  ft}$ affected by ice, modeled by \cite{poloni2014disturbance}: 
\begin{align*}
    \begin{bmatrix}
        \dot {\tt v}\\ \dot q\\ \dot \alpha \\ \dot \theta
    \end{bmatrix}&= \begin{bmatrix}
       -0.042& -0.023& -32.22& -32.20\\
    0.015 & -3.80 & -52.82 & 0\\
    -0.001 & 0.961 & -2.990 & 0\\
    0 & 1 & 0 & 0
        \end{bmatrix} \begin{bmatrix}
         {\tt v}\\  q\\  \alpha \\  \theta
    \end{bmatrix} +\\
    & \begin{bmatrix}
     -0.052 & 0.177\\
    -1.035 & 0.018\\
    -0.005 & 0\\
    0 &0
    \end{bmatrix}\begin{bmatrix}
        \delta_e\\ \delta_t
    \end{bmatrix} + \begin{bmatrix}
        -100.5& 5.27\\
    0 &0 \\
    -0.031& -0.599 \\
    0& 0 
    \end{bmatrix}\begin{bmatrix}
        \delta_{C_D}\\ \delta_{C_L}
    \end{bmatrix}.
\end{align*}

All states are deviations from nominal, in particular, deviations from the longitudinal speed, pitch rate, angle of attack and pitch angle. Inputs represent deviations from nominal elevator deflection and thrust respectively. The disturbances, unmatched, represent the effect of icing on drag ($w_1=\delta_{C_D}$) and lift ($w_2=\delta_{C_L}$) coefficients, respectively. We consider constraints: $|{\tt v}|\leq 5$ ${\tt  fps}$ and $|q|\leq 12$ ${\tt  deg\;s^{-1}}$ and as such consider these two states as outputs. Note that $range(B_1) = range(B)$ and thus the condition $range(HB_1) = range(HB)$ is fulfilled. The model is converted to discrete-time with a sampling period $T_s = 30$ ${\tt  ms}$, and ${\tt v}$ and $\alpha$ are assumed to be measured outputs.  The observer gain is chosen as
\begin{equation*}
    L = \begin{bmatrix}
       0.764 &  -0.036   & 0.0005 &  -0.008  & -0.15  &  0.03\\
   -0.008   & 0.49  & -0.04   & 0.16   & 0.01    &0.91
    \end{bmatrix}^\top\!\!.
\end{equation*}

The rate of change of the disturbance is bounded and  $\tilde{W}=[-2.5 \times 10^{-5},2.5 \times 10^{-5}] 
\times 
[-2.5 \times 10^{-4},  
2.5 \times 10^{-4}]$.  The initial values of the state and disturbance estimation errors were assumed to lie in the set $E_0 \in [-1,1] \times [-10^{-2},10^{-2}] \times [-10^{-4},10^{-4}] \times [-10^{-4},10^{-4}] \times
[-10^{-3},10^{-3}] \times [-10^{-3},10^{-3}].$

Additionally the following values were selected: 
\begin{equation}
    \Gamma = \begin{bmatrix}
          0.0208 &   0.0126\\
   -0.0030   & 0.0727
    \end{bmatrix},\quad \Lambda = \begin{bmatrix}
            0.9792  & -0.0126\\
    0.0030   & 0.9273
    \end{bmatrix}.
\end{equation}

To construct $\Omega_n$, the polyhedral bounding approach was used with  $n_L=100$ randomly selected directions. The terminal error bounding set $\Omega_{\tt f} \supseteq \bigcup_{n=60}^{70}\Omega_n$ was chosen and the set inclusion $\Omega_n \subseteq\Omega_{\tt f}$, $n=60,\cdots,150$ was verified. Then the set $\Omega_{\tt f}$ was used in place of  $\Omega_n$ for $n \geq 60$.  A vertex representation of $\Omega_{\tt f}$ was obtained by selecting all vertices of the sets $\Omega_n$, for 
$n =60,\cdots,70$ and deleting the ones strictly inside the convex hull of all vertices.
Additionally, $\tilde Y_{n} = Y_{\tt cl }\ominus\mathcal E^y(n,120)\ominus \epsilon\mathcal B^{ny}$ with $\epsilon = 0.01$.
Finally, the sequence $\tilde{O}_{\infty,n}$ was constructed, where following Proposition \ref{prop:MOAS_cst1},
$\tilde{O}_{\infty,n}=\tilde{O}_{\infty,60}$ for $n\geq 60$.

Figure \ref{fig:cc_states} shows time histories of the disturbances and outputs. Although the desired reference command is outside of the constraint bounds and is therefore never reached, the applied reference command reaches a value very close to the constraint boundary. Note that, although the disturbances are substantial (drag increase $\approx 50\%$ , lift reduction  $\approx 100\%$) the cancellation is effectively performed in the outputs. Other states, however, are significantly affected (not shown).

\begin{figure}[ht]
\centering
\includegraphics[scale=0.65]{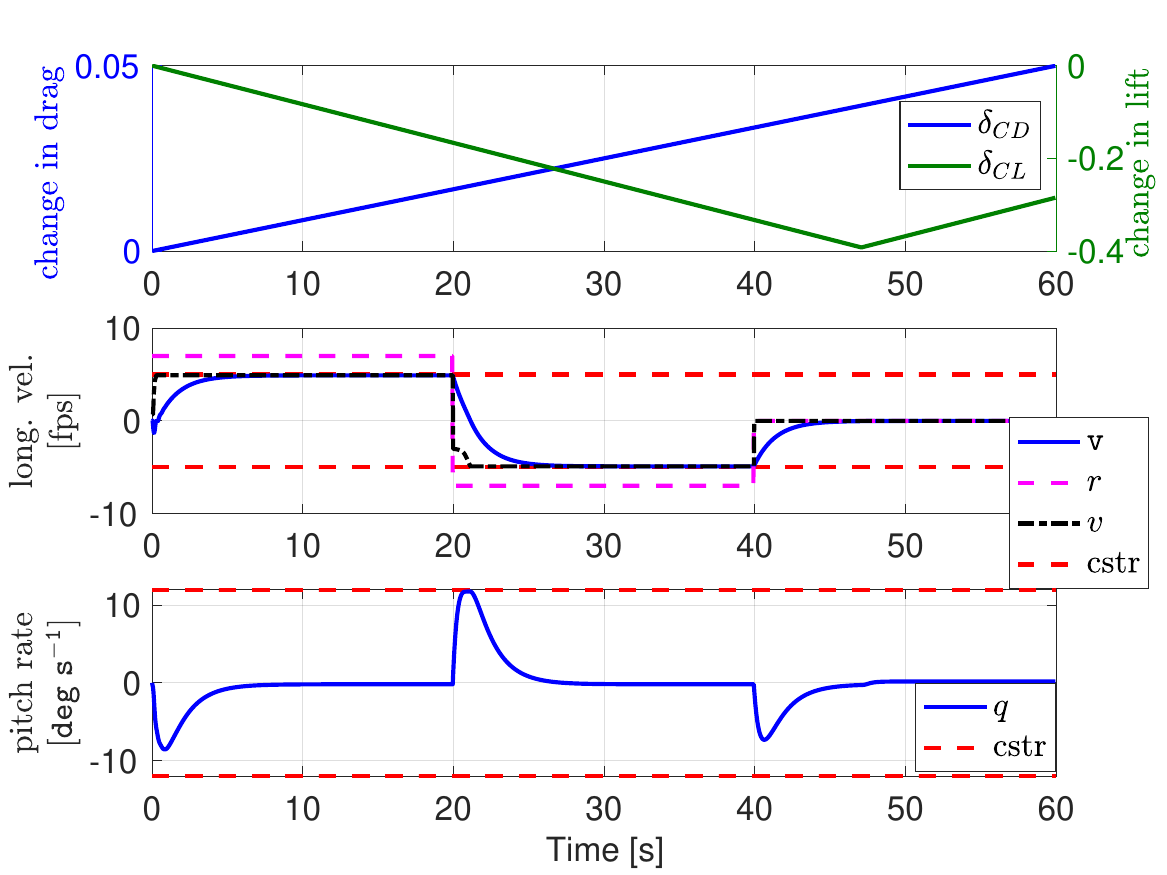}
\caption{Time histories of (estimated and actual) disturbances, outputs, desired reference, RG modified reference, and constraints.}
\label{fig:cc_states}
\end{figure}
\section{Conclusion}
In this paper we considered the application of reference governors (RG) to a class of linear, discrete time systems with unmeasured disturbance limits and output measurements only. We exploited disturbance input and state estimation to reduce the conservatism of the reference governor operation.
To accommodate time dependent bounds on the disturbance and state estimation errors we have employed a time dependent sequence of constraint admissible sets for reference governor implementation.  We have shown that reference governor properties, such as finite time convergence to strictly admissible in steady-state constant inputs, are preserved by our reference governor implementation. Numerical examples illustrated the potential for the use of such a reference governor in aerospace applications, the reduction in conservatism and the advantages of the polyhedral bounding over ellipsoidal bounding.


\begin{thebibliography}{10}

\bibitem{ANGELI2001151}
D.~Angeli, A.~Casavola, and E.~Mosca.
\newblock On feasible set-membership state estimators in constrained command governor control.
\newblock {\em Automatica}, 37(1):151--156, 2001.

\bibitem{chen2015disturbance}
Wen-Hua Chen, Jun Yang, Lei Guo, and Shihua Li.
\newblock Disturbance-observer-based control and related methods—an overview.
\newblock {\em IEEE Transactions on industrial electronics}, 63(2):1083--1095, 2015.

\bibitem{garone2017reference}
Emanuele Garone, Stefano Di~Cairano, and Ilya Kolmanovsky.
\newblock Reference and command governors for systems with constraints: A survey on theory and applications.
\newblock {\em Automatica}, 75:306--328, 2017.

\bibitem{gilbert1991linear}
Elmer~G Gilbert and K~Tin Tan.
\newblock Linear systems with state and control constraints: The theory and application of maximal output admissible sets.
\newblock {\em IEEE Transactions on Automatic control}, 36(9):1008--1020, 1991.

\bibitem{kolmanovsky1995maximal}
Ilya Kolmanovsky and Elmer Gilbert.
\newblock Maximal output admissible sets for discrete-time systems with disturbance inputs.
\newblock In {\em Proceedings of 1995 American Control Conference-ACC'95}, volume~3. IEEE, 1995.

\bibitem{kolmanovsky1998theory}
Ilya Kolmanovsky and Elmer~G Gilbert.
\newblock Theory and computation of disturbance invariant sets for discrete-time linear systems.
\newblock {\em Mathematical problems in engineering}, 4(4):317--367, 1998.

\bibitem{mayne2005robust}
David~Q Mayne, Mar{\'\i}a~M Seron, and SV~Rakovi{\'c}.
\newblock Robust model predictive control of constrained linear systems with bounded disturbances.
\newblock {\em Automatica}, 41(2):219--224, 2005.

\bibitem{MAYNE20061217}
D.Q. Mayne, S.V. Raković, R.~Findeisen, and F.~Allgöwer.
\newblock Robust output feedback model predictive control of constrained linear systems.
\newblock {\em Automatica}, 42(7):1217--1222, 2006.

\bibitem{MAYNE20092082}
D.Q. Mayne, S.V. Raković, R.~Findeisen, and F.~Allgöwer.
\newblock Robust output feedback model predictive control of constrained linear systems: Time varying case.
\newblock {\em Automatica}, 45(9):2082--2087, 2009.

\bibitem{ong2006minimal}
Chong-Jin Ong and Elmer~G Gilbert.
\newblock The minimal disturbance invariant set: Outer approximations via its partial sums.
\newblock {\em Automatica}, 42(9):1563--1568, 2006.

\bibitem{ossareh2020reference}
Hamid~R Ossareh.
\newblock Reference governors and maximal output admissible sets for linear periodic systems.
\newblock {\em International Journal of Control}, 93(1):113--125, 2020.

\bibitem{poloni2014disturbance}
Tom{\'a}{\v{s}} Pol{\'o}ni, Uro{\v{s}} Kalabi{\'c}, Kevin McDonough, and Ilya Kolmanovsky.
\newblock Disturbance canceling control based on simple input observers with constraint enforcement for aerospace applications.
\newblock In {\em 2014 IEEE Conference on Control Applications (CCA)}, pages 158--165. IEEE, 2014.

\bibitem{rakovic2005invariant}
Sasa~V Raković, Eric~C Kerrigan, Konstantinos~I Kouramas, and David~Q Mayne.
\newblock Invariant approximations of the minimal robust positively invariant set.
\newblock {\em IEEE Transactions on automatic control}, 50(3):406--410, 2005.

\bibitem{SHINGIN2004389}
Hidenori Shingin and Yoshito Ohta.
\newblock Optimal invariant sets for discrete-time systems: Approximation of reachable sets for bounded inputs.
\newblock {\em IFAC Proceedings Volumes}, 37(11):389--394, 2004.
\newblock 10th IFAC/IFORS/IMACS/IFIP Symposium on Large Scale Systems 2004.

\end{thebibliography}

\end{document}